\documentclass[reprint,amsmath,amssymb,aps,pra,onecolumn]{revtex4-2}

\usepackage{dcolumn}
\usepackage{bm}

\usepackage{relsize}
\usepackage[utf8]{inputenc}
\usepackage[english]{babel}
\usepackage[T1]{fontenc}
\usepackage{hyperref}
\usepackage{graphicx}
\usepackage{xcolor}

\usepackage{amsthm}
\usepackage{amsmath} 
\usepackage{amssymb}
\usepackage[colorinlistoftodos]{todonotes}
\usepackage{ulem}
\usepackage{blkarray}
 
\newtheorem{theorem}{Theorem}
\newtheorem{corollary}[theorem]{Corollary}
\newtheorem{lemma}[theorem]{Lemma}
\newtheorem{definition}[theorem]{Definition}
\newtheorem{remark}[theorem]{Remark}
\newtheorem{example}[theorem]{Example}
\newtheorem{proposition}[theorem]{Proposition}

\newcommand{\transpose}[1]{{#1}^{\mathsf{T}}}
\newcommand{\inverse}[1]{{#1}^{-1}}
\newcommand{\mymatrix}[1]{\begin{bmatrix}#1\end{bmatrix}}
\newcommand{\Left}{\mathopen{}\mathclose\bgroup\left}
\newcommand{\Right}{\aftergroup\egroup\right}

\begin{document}

\preprint{APS/123-QED}

\title{Mixed-register Stabilizer Codes: A Coding-theoretic Perspective}

\author{Himanshu Dongre}
 \email{himanshudongre1523@gmail.com}
\affiliation{Department of Computer Science,
                    University of Illinois Chicago,
                   Chicago, Illinois, 60607}
\author{Lane G. Gunderman}
 \email{lanegunderman@gmail.com}
\affiliation{Department of Electrical and Computer Engineering,
                    University of Illinois Chicago,
                   Chicago, Illinois, 60607}

\date{\today}

\begin{abstract}
Protecting information in systems that have more than two basis states (qudits) not only offers a promising route for reducing the number of individual quantum locations that must be protected, while more accurately reflecting the structure of realistic quantum hardware, but also has some possibly enticing foundational strengths. While work in the past has largely focused on protecting information in quantum devices with locations that are some consistent local structure, this work considers coding-theoretic constraints on devices constructed from locations which may vary in their local structures---these are mixed-register quantum devices. In this work we provide some general results for mixed-register Pauli operators, then identify some stabilizer encoded information forms that are forbidden. Building on these insights, we construct coding-theoretically optimal mixed-register stabilizer codes from sets of codes defined on coprime local-dimensions. The construction of such codes results in codes with logical subspaces that do not directly correspond to any of the constituent local-dimensions.
\end{abstract}

\maketitle

\section{Introduction}

Quantum computers may enable rapid solutions to a number of computational problems. A current hurdle is the challenge of preserving information in such systems, with a particularly popular approach being stabilizer codes, which are the quantum analog of classical additive codes \cite{gottesman1996class,gottesman1997stabilizer}. These began as binary encodings, using many physical qubits to encode some logical qubits, but in time methods for handling systems with a prime, or prime-power, number of levels were developed \cite{ketkar2006nonbinary,lidar2013quantum}. These systems would require the precise control of fewer physical locations, or registers \cite{watrous2018theory}, to encode the same amount of logical information. Along the same time period methods for encoding information within bosonic registers began to be developed, including the GKP qubit encoding method \cite{gottesman2001encoding}. Since that time, a myriad of means of encoding information within bosonic modes has been explored: continuous-variable \cite{noh2020encoding,barnes2004stabilizer,braunstein1998error,lloyd1998analog}, Fock basis \cite{chuang1997bosonic,michael2016new,albert2018performance}, GKP-like encoding \cite{conrad2022gottesman,conrad2023good,gunderman2024stabilizer}, and various other forms as outlined in \cite{albert2022bosonic}. Algebraic torsion became a possible tool for discretizing codespaces that were otherwise infinite, as highlighted in the rotor code construction methods \cite{vuillot2024homological,novak2024homological}. This property is not unique to rotors, but are exhibited by mathematical rings more broadly. This then leads one down a path of composite valued registers \cite{sarkar2023qudit,gunderman2025beyond}. Recently, there has been some development of encoding information within mixed-register systems whereby not all registers have the same underlying structure nor number of levels available \cite{chakraborty2025hybrid}. This direction also has value in designing means of connecting information in a logical way that connects platforms of possibly different local structure. This work considers a similar problem, but focuses on the coding-theoretic aspects. From the results shown, we observe atypical entanglement structures and find that composite-valued registers are in many cases actually required to couple systems.

There are three different reasons for considering performing quantum computations beyond the qubit case. Firstly, the size of the computational space grows exponentially with the local-dimension of the registers as the base, thus providing a logarithmic fractional reduction in the number of registers needed for performing computations. As a particular advantage, this may reduce the cooling requirements in superconducting qubits or the number of control lasers for trapped-ion and neutral atom systems. In the most extreme case, theoretically, the register count can be reduced to a constant number of infinite registers as was shown in \cite{brenner2025factoring}. While that work demonstrated what was theoretically possible, a realistic quantum device would need to consider the practicality of protecting information in such systems. So secondly, for this reason, certain physical systems have underlying physics which are better matched with structures beyond qubits. For instance, composite-valued registers would be required to protect a device utilizing the full basis set of isotopes such as $^{87}\text{Sr}$; which has spin-$9/2$ and thus $10$ possible bases. Further, the choice for modeling the local structure can be due to the structure of the impacting noise, which is particularly focused on in bosonic error-correction, or even to simply manage leakage outside the designated qubit subspace. Lastly, there are some foundational results, such as an easy test for states being non-stabilizer \cite{gross2006hudson} and a proof of contextuality providing for the power of quantum computation \cite{howard2014contextuality}. Together, these provide multiple reasons for looking beyond qubits and better understanding the breadth of possibilities.

This paper is organized as follows. Section \ref{dfn} lays out our notation, where most matches the standards of the topic, but some have nuanced alterations which will make our later arguments simpler. Following that, section \ref{aux} shows a number of auxiliary results including a number of no-go results for mixed-register stabilizer codes. In section \ref{coding} we then turn to a construction for mixed-register codes which has tremendous flexibility and can be used to take general stabilizer codes to mixed-register codes, with only a small caveat, therefore providing good qLDPC mixed-register codes. From there we conclude and specify various workarounds that would be consistent with the results we obtained.



\section{Definitions}\label{dfn}

In this section, we lay out our definitions and special notations which will aid in our later arguments. These definitions overlap heavily with the traditional definitions for (prime-valued) qudit codes, but have some variations to enable broader analysis.

Here we refer to each distinct quantum computational location as a register \cite{watrous2018theory}, while the local-dimension specifies the number of values which the register may take. For instance, a local-dimension of $2$ indicates the register is a qubit. For the special infinite cases of continuous-variable (CV) and integer symplectic lattices (ISL), as in GKP-like encodings, we will specify which is being discussed, however, use the local-dimension $\infty$ for both cases. Throughout this work, we will typically use lowercase $q$ with a subscript to indicate that the local-dimension is some prime number, while a capital $Q$ will indicate that the local-dimension is a (possibly) composite number.

\begin{definition}
The qudit Pauli group for a register with local-dimension $Q$ is generated by the following pair of unitary operators:
\begin{equation}
    X_Q|j\rangle=|j+1\mod Q\rangle,\quad Z_Q|j\rangle=\omega_Q^j|j\rangle,\quad \omega_Q=e^{2\pi i/Q}
\end{equation}
\end{definition}
In the special cases of CV and ISL codes, the distinction is that the modulo within the $X_Q$ operator is removed, while the phase induced is replaced by any irrational power of unity. For concreteness, we select $\omega_\infty:=e^{2\pi i/\sqrt{2}}$ as our induced phase. Importantly, this has infinite order, while still acting non-trivially in the infinite. As we don't care about the global phase, so long as we are consistent, the full Pauli group on a single register, $\mathcal{P}_Q^1$, is generated by $X_Q$ and $Z_Q$. These same generators on each register generate the $n$ register Pauli group $\mathcal{P}_Q^n$. These definitions suffice as $+1$ additively generates any ring of powers, albeit in the CV case a mapping to quadrature operators is formally needed \cite{gunderman2024stabilizer}.

To augment our Pauli group, we recall the definition of the qudit Clifford group:
\begin{definition}
A unitary operator $U$ is a Clifford operator if, under conjugation, it carries a single qudit Pauli operator to a single qudit Pauli operator. More succinctly, the Clifford group is defined as:
\begin{equation}
    \{U\in SU(Q^n)|\ UPU^\dag \in \mathcal{P}_Q^n,\ \forall P\in \mathcal{P}_Q^n\}
\end{equation}
\end{definition}

In this work, we will be considering quantum computing devices and their stabilizer codes where the values of the local-dimensions of the registers do not take a single constant value but instead vary between registers. For this, we introduce the following notation to indicate the local-dimension of a subset of registers. 

A quantum computing device, $D$, is a collection of computational registers. The dimension of register $i$, $Q_i$, is given by $\mathsf{Regdim}_i(D)$. For a set of registers we use a subscript indicating the starting and ending register indices. Herein $\mathsf{Regdim}$ returns positive integer values at least $2$, indicating a qudit register, or infinity ($\infty$) indicating a bosonic register in phase space representation or an integer symplectic lattice (GKP-like) register. One could extend this notation further to also permit rotor registers or finite rotor registers---whereby the error axes take different local-dimension values.

As working directly with Pauli operators and the group of Pauli operators is often more cumbersome than working with modules of powers of the Pauli operators, we will now shift to a slightly extended version of the symplectic representation. This will most nearly mirror the $\phi_\infty$ representation from \cite{gunderman2020local,gunderman2024stabilizer}, although with some additional nuance. Importantly, we must pair our notation with that introduced by $\mathsf{Regdim}$.

\begin{definition}[$\phi$ representation of codes]
Let $P$ be a Pauli operator on an $n$ register quantum device $D$ with local-dimensions given by $\mathsf{Regdim}(D)$. Then the $\phi$ representation of this operator is given by the linear surjective map defined as carrying $\mathcal{P}_{\mathsf{Regdim}(D)}^n\mapsto \mathbb{Z}_{\mathsf{Regdim}(D)}^{2n}$. Notably entry $i$ and $i+n$ are taken modulo $\mathsf{Regdim}_i(D)$ and entry $i$ is the power of $X_Q$ and entry $i+n$ is the power of $Z_Q$ of the operator $P$. Further, this mapping is defined as a homomorphism for a pair of Pauli operators with $\phi_{\mathsf{Regdim}(D)}(s_1s_2)=\phi_{\mathsf{Regdim}(D)}(s_1)\tilde{\oplus}\phi_{\mathsf{Regdim}(D)}(s_2)$, with $\tilde{\oplus}$ being the entrywise addition with the appropriate modulos applied. Further $\phi_{\mathsf{Regdim}(D),x}$ represents the first half of the vector while $\phi_{\mathsf{Regdim}(D),z}$ represents the latter half of the vector.    
\end{definition}

For ease, we will typically write a vertical bar to separate the first and second half of the vectors as they form symplectic pairs of the same registers. In effect, the notation of a letter is a Pauli operator, whereas when wrapped with a $\phi$ the symbol is now a vector in a possibly module space. Note that formally for the infinite case we must permit the modulo to be omitted.

Now that we have a vector representation for our Pauli operators, we will define an inner product of sorts that will discern whether a pair of generalized Pauli operators commute or not.

\begin{definition}
Let $D$ be a quantum device with possibly different local-dimensions across the registers. The generalized symplectic product is defined as follows. Let $\mathsf{Regdim}(D)=[Q_1,Q_2,\ldots, Q_n]$, where $Q_i$ are the local-dimension of register $i$. Then we define the generalized symplectic inner product, $\tilde{\odot}$, as $\phi(s_i)\tilde{\odot} \phi(s_j)=\sum_{h=1}^n \frac{1}{Q_h}(\phi_{x[h]}(s_i)\phi_{z[h]}(s_j)-\phi_{x[h]}(s_j)\phi_{z[h]}(s_i))$.
\end{definition}

As a special version of this notation, we only sum the first $u$ registers if we place a subscript $u$ beneath the generalized symplectic product. This definition extends the traditional symplectic product, which indicates the power of the principle root of unitary that is induced from the reordering of operators $s_i,s_j$. The extension here is that we now are asking what the net power, over the real numbers, of the root of unitary that is induced from swapping the order of the operators. This is accomplished with the division by the local-dimension for each register, $Q_h$, thus accounting for which root of unitary is induced from the swapping of each register in the tensor product.

The central goal of this work is to consider mixed-register stabilizer codes from a coding-theoretic perspective. As such we explicitly define what suffices for such codes:

\begin{definition}
A mixed-register stabilizer code $S$ on a device $D$, with $\mathsf{Regdim}_{1\Rightarrow n}(D)=[Q_1,Q_2,\ldots, Q_n]$ where each $Q_i$ can be any positive integer (greater than $1$) or infinity, is a commuting subset of generalized Pauli operators over dimensions $[Q_1,Q_2,\ldots,Q_n]$ where commutation is satisfied iff $\phi(s_i)\tilde{\odot}\phi(s_j)=0\mod 1$ for all $s_i,s_j\in S$.
\end{definition}


A trivial way to construct mixed-register stabilizer codes is to select disjoint stabilizer codes on different local-dimensions, effectively describing the tensor product of the separate codes. For our purposes, we wish to seek out mixed-register stabilizer codes which are non-trivial. For this, we begin by first proving various properties for collections of mixed-register Pauli operators, then ruling out some possibilities for constructing mixed-register codes, before we provide an explicit and general construction in Theorem \ref{scannedconstruction}.

\section{No-go results and auxiliary claims}\label{aux}

We begin with a result on physical registers, which carries over to logical registers. We will consider the action of trans-dimensional entangling gates. An example of one between a qubit and a qutrit is $\mathsf{CNOT}_{2,3}=|0\rangle\langle0|\otimes I+|1\rangle\langle1|\otimes X_3$, with $X_3$ the qutrit $X$ gate. Let us set $U=\mathsf{CNOT}_{2,3}$, and consider conjugating $I_2Z_3$ by this operator. \textit{A priori} one might expect this to act as a Clifford operation, but we have:
\begin{eqnarray}
    U(I_2Z_3)U^\dag&=&|0\rangle\langle0|\otimes Z_3+|1\rangle\langle1|\otimes X_3Z_3X_3^\dag\\
    &=& |0\rangle\langle0|\otimes Z_3+\omega_3|1\rangle\langle1|\otimes Z_3\\
    &=& \frac{1}{2}(I_2+Z_2)\otimes Z_3+\frac{\omega_3}{2}(I_2-Z_2)\otimes Z_3\\
    &=& \frac{1}{2}((1+\omega_3)I_2+(1-\omega_3)Z_2)\otimes Z_3\\
    &=& \frac{1+\omega_3}{2}I_2Z_3+\frac{1-\omega_3}{2}Z_2Z_3.
\end{eqnarray}
Thus, we see that conjugating by this trans-dimensional entangling gate has taken a single Pauli into a sum of two Pauli operators---therefore it is not Clifford. This leads us to the following general result:

\begin{lemma}[Coprime trans-dimensional entangling gates are non-Clifford]
Suppose we have a pair of registers with local-dimensions $Q_1$ and $Q_2$, with $\mathsf{gcd}(Q_1,Q_2)=1$, then any unitary that generates entanglement between these registers must be non-Clifford.
\end{lemma}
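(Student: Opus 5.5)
Assume $U$ is a Clifford operator on the composite register $\mathbb{C}^{Q_1}\otimes\mathbb{C}^{Q_2}$; we will show it must be a product of local unitaries, so it cannot generate entanglement. The central observation is that the two-register Pauli group $\mathcal{P}_{[Q_1,Q_2]}$ is, up to phases, the abelian group $\mathbb{Z}_{Q_1}^2\times\mathbb{Z}_{Q_2}^2$, which is isomorphic to $\mathbb{Z}_{Q_1Q_2}^2$ when $\gcd(Q_1,Q_2)=1$. Conjugation by $U$ induces an automorphism of this group modulo phases that preserves the generalized symplectic product $\tilde{\odot}$ defined above. By the Chinese remainder theorem, the $Q_1$-torsion and $Q_2$-torsion subgroups are canonically determined. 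The $Q_1$-torsion subgroup consists exactly of the Paulis of the form $P\otimes I$, and the $Q_2$-torsion subgroup of those of the form $I\otimes P'$.

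The first step makes this precise. Take a Pauli $P\otimes I$ with $P\in\mathcal{P}_{Q_1}^1$. Then $(P\otimes I)^{Q_1}$ is proportional to the identity, so $U(P\otimes I)U^\dagger$ is a Pauli $A\otimes B$ whose $Q_1$-th power is also proportional to the identity. In the $\phi$ representation the $Q_2$-part satisfies $Q_1\phi(B)=0 \bmod Q_2$. Since $Q_1$ is invertible modulo $Q_2$, this forces $\phi(B)=0$, so $B\propto I$. The same argument shows that $U$ maps $I\otimes\mathcal{P}_{Q_2}^1$ into $I\otimes\mathcal{P}_{Q_2}^1$ up to phase.

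The second step lifts this group-level statement to the unitary. Conjugation by $U$ preserves the operator algebra $\mathcal{A}_1=M_{Q_1}\otimes I$, because that algebra is spanned by $\mathcal{P}_{Q_1}^1\otimes I$. Likewise it preserves $\mathcal{A}_2=I\otimes M_{Q_2}$. Every automorphism of a full matrix algebra $M_{Q_1}$ is inner (Skolem--Noether). So there exist unitaries $V_1,V_2$ with $U(M\otimes I)U^\dagger=V_1MV_1^\dagger\otimes I$ and similarly on the second factor. Then $W=(V_1\otimes V_2)^\dagger U$ commutes with $\mathcal{A}_1$ and $\mathcal{A}_2$, hence with everything, so $W$ is a scalar. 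Therefore $U=e^{i\theta}V_1\otimes V_2$ is local and generates no entanglement. This is a contradiction, so any entangling unitary is non-Clifford.

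I expect the main obstacle to be the first step in the infinite or bosonic cases, if the lemma is meant to cover them. There "coprime" needs an interpretation, and torsion arguments fail for $\mathbb{Z}$. I would restrict the statement to finite $Q_1,Q_2$ and remark on this. For the finite case, the torsion argument requires care with phases: $(A\otimes B)^{Q_1}$ may be a nontrivial phase times the identity. Since the $\phi$ representation discards phases, this is harmless. The CNOT$_{2,3}$ computation above serves as an illustrative instance of the failure.
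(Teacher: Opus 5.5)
Your proof is correct, and its first step is the same idea the paper uses: conjugation by a Clifford preserves Pauli orders modulo phases, and coprimality then forces the image of $P\otimes I$ to have trivial second tensor factor. Your version differs from the paper's in two places.

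\textbf{Handling composite dimensions.} The paper argues via exact orders. It asserts that every non-identity Pauli on a $Q_i$-register has order exactly $Q_i$, then does an lcm case analysis. That assertion holds only for prime $Q_i$ (e.g.\ $X_6^3$ has order $2$), which is why the paper has to add a remark about ``iterating through each prime factor.'' Your formulation, $(P\otimes I)^{Q_1}\propto I \Rightarrow Q_1\phi(B)\equiv 0 \bmod Q_2 \Rightarrow \phi(B)=0$, needs only that the order divides $Q_1$. It therefore covers composite coprime dimensions directly.

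\textbf{Lifting to the unitary.} The paper stops at the group-level statement that local Paulis go to local Paulis, and then declares the Clifford group ``separable.'' Your Skolem--Noether step supplies what is actually needed to conclude that $U$ is a product of local unitaries and hence non-entangling. Conjugation by $U$ restricts to $*$-automorphisms of $M_{Q_1}\otimes I$ and $I\otimes M_{Q_2}$. These are inner, and the implementing elements can be taken unitary because the automorphisms are $*$-preserving. So $(V_1\otimes V_2)^\dagger U$ commutes with the full algebra and is a scalar.

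In short, your route is the paper's argument made rigorous, at the cost of invoking a standard algebraic fact. The paper's version is shorter but leaves both of these points implicit.
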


\begin{proof}
Consider a Pauli $P=P_1\otimes I\in \mathcal{P}_{Q_1,Q_2}$ with $P_1\neq I$. Then, the order of $P_1$ in $\mathcal{P}_{Q_1}$ is $Q_1$. So the order of $P$ is also $Q_1$. Conjugation by $U$ preserves order (since it induces an automorphism on the Pauli group modulo phases). Hence, if $UPU^\dag=P_1'\otimes P_2'$, then the order of $P_1'\otimes P_2'$ in $\mathcal{P}_{Q_1,Q_2}$ is also $Q_1$. But the order of $P_1'\otimes P_2'$ can be written as $\mathsf{lcm}(r_1',r_2')$ where $r_1'$ and $r_2'$ are respectively the order of $P_1'$ in $\mathcal{P}_{Q_1}$ and $P_2'$ in $\mathcal{P}_{Q_2}$. For $i\in\{1,2\}$, $r_i'=1$ if $P_i'=I$, otherwise $r_i'=Q_i$. Hence, there are only 4 possible combination of $r$ values, but only one has a least-common-multiple of $Q_1$---$r_2'=1$. This combination dictates that $P$ must have identity on the second register which under conjugation is not impacted. Following the same argument for the other register, we obtain our result. Lastly, this result extends to any coprime dimension values using the same argument and iterating through each prime factor. Therefore the Clifford group is separable for mixed-registers which are coprime.   
\end{proof}

This result has further implications for encoding circuits. In addition, as transversality and universality are incompatible for stabilizer codes \cite{eastin2009restrictions,zeng2011transversality}, this prior fact would then suggest that for these varieties of devices, it may be wiser to teleport the entangling gates and implement all single logical register operations transversally. There is another way to sort of apply a logical trans-dimensional entangling gate, but between pseudo-primes encoded in composite local-dimension valued registers. For concreteness, let us begin with a qubit CSS code and a qutrit CSS code. These each automatically have transversal $\mathsf{SUM}$ gates (qudit $\mathsf{CNOT}$). Next, we use these to protect quhexes ($Q=6$), where we multiply the powers of the qubit code by $3$, and the powers of the qutrit code by $2$. This code will now encode logical qubits and qutrits within the quhexes, and will have a transversal $\mathsf{SUM}$ gate on quhexes. The caveat here is that this is not truly a mixed-register device, but instead a device with composite local-dimension. This construction is known as the pick-and-mix construction from \cite{gunderman2025beyond}, and the idea there is built upon later in this work.

As our next result, we consider minimal Pauli operators needed to satisfy a given commutation relationship. This result has been shown a number of ways, first in \cite{wilde2008optimal}, then with various reinterpretations in \cite{gunderman2023transforming,gunderman2024minimal,huber2025quasi}. The closest analog to our result is that obtained in \cite{sarkar2023qudit}, which can be interpreted as coinciding with our result if we multiply all registers by the least common multiple of all registers. However, here we explicitly consider the case where registers may have differing local-dimension values.


\begin{proposition}[Lower bound on the number of additional registers to resolve the commutator]\label{prop:resolution_lb}
Let $P$ be a set of compositionally independent Pauli operators for a quantum device $D$ with $\mathsf{Regdim}_{1\to u}(D)=[q_1,q_2,\ldots,q_u]$, then these cannot be extended into generators for a mixed-register stabilizer code with fewer than $\frac{1}{2}\mathsf{rank}([\tilde{\odot}_u])$ additional registers, where $[\tilde{\odot}_u]$ is the matrix of generalized symplectic inner product with $[\tilde{\odot}_u]_{ij}=\phi(p_i)\tilde{\odot}\phi(p_j)$ for $p_i,p_j\in P$.
\end{proposition}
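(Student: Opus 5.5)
The plan is to reduce the statement to a rank argument on the commutator matrix. Suppose $P=\{p_1,\ldots,p_m\}$ is extended by $v$ additional registers, with local-dimensions $[Q_{u+1},\ldots,Q_{u+v}]$, so that the extended operators $\tilde p_i=p_i\otimes r_i$ pairwise commute. By the definition of a mixed-register stabilizer code, this means that for all $i,j$
\begin{equation}
\phi(p_i)\tilde{\odot}_u\phi(p_j)+\sum_{h=u+1}^{u+v}\frac{1}{Q_h}\bigl(\phi_{x[h]}(r_i)\phi_{z[h]}(r_j)-\phi_{x[h]}(r_j)\phi_{z[h]}(r_i)\bigr)\equiv 0 \pmod 1 .
\end{equation}
So we must show that the added registers produce a matrix $B$, with $B_{ij}$ the second sum, that cancels $[\tilde{\odot}_u]$ modulo $1$. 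Moreover $B$ can only do this if it has rank at least $\mathsf{rank}([\tilde{\odot}_u])$ in the appropriate sense.

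The key step is a factorization of $B$. Write $R$ for the $m\times 2v$ matrix whose row $i$ is $\phi(r_i)$, with each column entry scaled by the corresponding $1/Q_h$ on the $x$-half, for instance. Let $\Lambda=\mymatrix{0 & I_v\\ -I_v & 0}$. Then $B=R\Lambda\transpose{R}$, so $\mathsf{rank}(B)\le 2v$. The condition says $[\tilde{\odot}_u]+B$ is an integer matrix. Since the $[\tilde{\odot}_u]$ entries can be lifted to representatives in $[0,1)$ or $(-1/2,1/2]$ and $B$ can be lifted to rationals, we would like to conclude $[\tilde{\odot}_u]\equiv -B$. Then $\mathsf{rank}([\tilde{\odot}_u])\le\mathsf{rank}(B)\le 2v$, which gives $v\ge\frac12\mathsf{rank}([\tilde{\odot}_u])$.

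The main obstacle is making ``rank'' meaningful modulo $1$, because an integer correction matrix $K$ with $[\tilde{\odot}_u]=-B+K$ could in principle raise the rank. I would handle this by choosing the rank notion to match: interpret $[\tilde{\odot}_u]$ as a matrix over $\mathbb{Q}/\mathbb{Z}$. Equivalently, multiply through by $L=\mathsf{lcm}$ of all local-dimensions involved and work over $\mathbb{Z}_L$, decomposing by the Chinese remainder theorem into prime-power components $\mathbb{Z}_{p^a}$. On each component I would use the rank over the local ring (or McCoy rank). The factorization $L\cdot B\equiv (LR')\Lambda\transpose{R'}$ is then a genuine matrix identity over $\mathbb{Z}_L$ with inner dimension $2v$, and rank is subadditive and bounded by the inner dimension of a factorization. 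This yields $\mathsf{rank}_{\mathbb{Z}_L}(L[\tilde{\odot}_u])\le 2v$, mirroring the known results of \cite{wilde2008optimal,sarkar2023qudit}, which the paper notes match this one after multiplying by the least common multiple.

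Finally, I would remark that compositional independence of $P$ is only needed so that the rows represent distinct generators. Otherwise the rank of the commutator matrix would be unchanged, but the statement would concern redundant operators.
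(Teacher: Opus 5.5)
Your argument runs on the same mechanism as the paper's. The paper also writes the ancilla contribution as $\Gamma_XD'\Gamma_Z^{\mathsf{T}}-\Gamma_ZD'\Gamma_X^{\mathsf{T}}$ and bounds its rank by $2c$. The two proofs part ways exactly at the step you flag as the obstacle: the paper simply sets $[\tilde{\odot}_u]'=O$ and uses real rank, although commutation only requires the sum to vanish modulo $1$. Your caution is warranted, and this is not only a matter of rigor, because with real rank the statement is false. For $X_1X_2$ and $Z_1Z_2$ on two qubits, $\phi(p_1)\tilde{\odot}\phi(p_2)=\frac{1}{2}(1+1)=1$. So $[\tilde{\odot}_u]=\begin{bmatrix}0&1\\-1&0\end{bmatrix}$ has real rank $2$, yet the pair already commutes and needs no additional register. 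A modular notion of rank is therefore genuinely required, and your route is the right repair. The cost is that $\mathsf{rank}$ in the statement has to be reinterpreted.

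As written, however, the repair is incomplete in two places.

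First, $(LR')\Lambda R'^{\mathsf{T}}$ is not an identity in which both factors are integral. Write instead $LB=F\Lambda F^{\mathsf{T}}$ with $F=\begin{bmatrix}\Gamma_X\,\mathsf{diag}(L/Q_{u+1},\ldots,L/Q_{u+v}) & \Gamma_Z\end{bmatrix}$. This is an integer $m\times 2v$ matrix, so $L[\tilde{\odot}_u]\equiv -F\Lambda F^{\mathsf{T}}\pmod{L}$. A single factorization of inner dimension $2v$ then suffices. You should drop the appeal to subadditivity, which in any case fails for McCoy rank over composite $\mathbb{Z}_L$: $\mathsf{diag}(2,0)$ and $\mathsf{diag}(3,0)$ over $\mathbb{Z}_6$ each have McCoy rank $0$, but their sum has McCoy rank $1$.

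Second, and more substantively, your $L$ includes the ancilla dimensions, so the rank you use must not depend on $L$. Otherwise you only get a bound that depends on the chosen extension, not on $P$. McCoy rank fails this test, as does mod-$p$ rank on each CRT component. Resolve $X_2,Z_2$ on a qubit with a single ququart via $X_4^2$ and $Z_4^{-1}$. Then $L=4$ and $L[\tilde{\odot}_u]=\begin{bmatrix}0&2\\-2&0\end{bmatrix}$, whose McCoy rank over $\mathbb{Z}_4$ is $0$, so the bound becomes vacuous.

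The notion that works is the minimal number of generators of the row module of $[\tilde{\odot}_u]$ inside $(\mathbb{Q}/\mathbb{Z})^m$. Equivalently, it is the number of nonzero invariant factors of $L[\tilde{\odot}_u]$ over $\mathbb{Z}_L$. It is unchanged under the embeddings $\mathbb{Z}_{L_0}\hookrightarrow\mathbb{Z}_L$, so it is intrinsic to $P$. It is at most $2v$ because:
\begin{itemize}
\item each row of $F\Lambda F^{\mathsf{T}}$ lies in the span of the $2v$ rows of $F^{\mathsf{T}}$;
\item a subgroup of a $2v$-generated finite abelian group is itself $2v$-generated.
\end{itemize}
In the example above this rank is $2$, giving the correct $v\ge 1$. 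With this choice, the block matrix in the subsequent saturation proposition still has rank $2c$, since $d_1\mid\cdots\mid d_c$ with $d_1>1$. So the bound remains tight.
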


\begin{proof}
Suppose $\lvert P\rvert=k$. Let $G=\mymatrix{G_X&G_Z}$ be the $k\times 2u$ (symplectic) check matrix for the Pauli operators in $P$ (the $i^{\text{th}}$ row of $G$ is $\phi(p_i)$ for $p_i\in P$). Let $D$ be the $u\times u$ matrix $D=\mathsf{diag}\Left(\frac{1}{q_1},\ldots,\frac{1}{q_u}\Right)$, and $J$ be the $2u\times 2u$ matrix $\mymatrix{O&D\\-D&O}$. Then, by definition,
\begin{eqnarray}
[\tilde{\odot}_u] &=& GJ\transpose{G}\\
&=& \mymatrix{G_X&G_Z}\mymatrix{O&D\\-D&O}\mymatrix{\transpose{G}_X\\\transpose{G}_Z}\\
&=& \mymatrix{G_X&G_Z}\mymatrix{D\transpose{G}_Z\\-D\transpose{G}_X}\\
&=& G_XD\transpose{G}_Z-G_ZD\transpose{G}_X.
\end{eqnarray}
To resolve this commutator, suppose we introduce $c$ registers of local-dimensions $d_1,d_2,\ldots,d_c$. Then, the blocks $G_X$ and $G_Z$ get augmented with $c$ columns each:
\begin{gather}
G_X\gets\mymatrix{G_X&\Gamma_X},\\[1mm]
G_Z\gets\mymatrix{G_Z&\Gamma_Z},
\end{gather}
where $\Gamma_X$ and $\Gamma_Z$ have dimensions $k\times c$ each. Hence, the new check matrix $G^\prime=\mymatrix{G_X&\Gamma_X&G_Z&\Gamma_Z}$. Let $D^\prime$ be the $c\times c$ matrix $D^\prime=\mathsf{diag}\Left(\frac{1}{d_1},\ldots,\frac{1}{d_c}\Right)$, and $J^\prime$ be the $2(u+c)\times 2(u+c)$ matrix $J^\prime=\mymatrix{&&D&\\&&&D^\prime\\-D&&&\\&-D^\prime&&}$. Then, by definition, the new commutator matrix is
\begin{eqnarray}
[\tilde{\odot}_u]^\prime &=& G^\prime J^\prime\transpose{G^\prime}\\
&=& \mymatrix{G_X&\Gamma_X&G_Z&\Gamma_Z}\mymatrix{&&D&\\&&&D^\prime\\-D&&&\\&-D^\prime&&}\mymatrix{\transpose{G}_X\\\transpose{\Gamma}_X\\\transpose{G}_Z\\\transpose{\Gamma}_Z}\\
&=& \Left(G_XD\transpose{G}_Z-G_ZD\transpose{G}_X\Right)+\Left(\Gamma_XD^\prime\transpose{\Gamma}_Z-\Gamma_ZD^\prime\transpose{\Gamma}_X\Right)\\
&=& [\tilde{\odot}_u]+\Left(\Gamma_XD^\prime\transpose{\Gamma}_Z-\Gamma_ZD^\prime\transpose{\Gamma}_X\Right).
\end{eqnarray}
Since our aim in the first place was to resolve the commutator, set $[\tilde{\odot}_u]^\prime=O$. Hence,
\begin{eqnarray}
&&\Gamma_ZD^\prime\transpose{\Gamma}_X-\Gamma_XD^\prime\transpose{\Gamma}_Z = [\tilde{\odot}_u]\\
&\Rightarrow& \mathsf{rank}\Left(\Gamma_ZD^\prime\transpose{\Gamma}_X-\Gamma_XD^\prime\transpose{\Gamma}_Z\Right) = \mathsf{rank}([\tilde{\odot}_u]).\label{eq:rank_eq}
\end{eqnarray}
Since multiplication by an invertible diagonal matrix (like $D^\prime$) simply scales the entries; i.e., preserves the column and row subspaces,
\begin{gather}
\mathsf{rank}\Left(\Gamma_ZD^\prime\transpose{\Gamma}_X\Right)=\mathsf{rank}\Left(\Gamma_Z\transpose{\Gamma}_X\Right).
\end{gather}
By Sylvester's rank inequality,
\begin{gather}
\mathsf{rank}(\Gamma_Z\transpose{\Gamma}_X)\leq \min(\mathsf{rank}(\Gamma_X),\mathsf{rank}(\Gamma_Z)).
\end{gather}
Since both $\Gamma_X$ and $\Gamma_Z$ are $k\times c$ matrices, both $\mathsf{rank}(\Gamma_X)$ and $\mathsf{rank}(\Gamma_Z)$ are constrained by the dimension $c$. Hence,
\begin{gather}
\min(\mathsf{rank}(\Gamma_X),\mathsf{rank}(\Gamma_Z))\leq c.
\end{gather}
Therefore,
\begin{eqnarray}
\mathsf{rank}(\Gamma_ZD^\prime\transpose{\Gamma}_X) &=& \mathsf{rank}(\Gamma_Z\transpose{\Gamma}_X)\\
&\leq& \min(\mathsf{rank}(\Gamma_X),\mathsf{rank}(\Gamma_Z))\\
&\leq& c.\label{eq:half_rank_bound}
\end{eqnarray}
Since $\Gamma_XD^\prime\transpose{\Gamma}_Z=\transpose{(\Gamma_ZD^\prime\transpose{\Gamma}_X)}$, $\mathsf{rank}\Left(-\Gamma_XD^\prime\transpose{\Gamma}_Z\Right)=\mathsf{rank}\Left(\Gamma_XD^\prime\transpose{\Gamma}_Z\Right)=\mathsf{rank}\Left(\Gamma_ZD^\prime\transpose{\Gamma}_X\Right)$. By rank subadditivity and \eqref{eq:half_rank_bound},
\begin{gather}
\mathsf{rank}\Left(\Gamma_ZD^\prime\transpose{\Gamma}_X-\Gamma_XD^\prime\transpose{\Gamma}_Z\Right)\leq 2c.\label{eq:rank_bound}
\end{gather}
Therefore, from \eqref{eq:rank_eq} and \eqref{eq:rank_bound},
\begin{gather}
c\geq\frac{1}{2}\mathsf{rank}([\tilde{\odot}_u]).
\end{gather}
\end{proof}

Next, we provide a canonical structural decomposition for collections of Pauli operators on mixed-register quantum devices. This will enable our first mixed-register code construction as given in Construction \ref{constr:resolution}.

\begin{theorem}[Mixed-register Pauli subgroup decomposition]\label{thm:decomposition}
Suppose $\mathcal{D}$ is a quantum device with $n$ registers having local-dimensions $q_1,\ldots,q_n \geq 2$, and $\hat{\mathcal{P}}$ is the mixed-register Pauli group mod phases on $\mathcal{D}$. Let $Q=\mathsf{lcm}(q_1,\ldots,q_n)$. Let $\mathcal{T}\leq\hat{\mathcal{P}}$ be any subgroup generated by finitely many Paulis. Then there exists a generating set for $\mathcal{T}$ of the form $$\{W_1,W_2,\ldots,W_\ell,U,V,U_2,V_2,\ldots,U_c,V_c\}$$ such that the following conditions are satisfied:
\begin{enumerate}
    \item Commutation condition: the generators satisfy the following relations (all mod 1):
    \begin{enumerate}
        \item $\phi(W_i)\tilde{\odot}\phi(P)=0$ for all $P\in\mathcal{T}$ for all $i\in\{1,\dots,\ell\}$,
        \item $\phi(U_i)\tilde{\odot}\phi(U_j)=\phi(V_i)\tilde{\odot}\phi(V_j)=0$ for all $i,j\in\{1,\dots,c\}$,
        \item $\phi(U_i)\tilde{\odot}\phi(V_j)=0$ for all $i\ne j$,
        \item $\phi(U_i)\tilde{\odot}\phi(V_i)=\frac{1}{d_i} (\!\!\!\!\mod 1)$ with $d_i\mid Q$.
    \end{enumerate}
    \item Divisibility chain condition: $d_1\mid d_2\mid\ldots\mid d_c$.
    \item Uniqueness condition: the multiset $\{d_1,\ldots.d_c\}$ is unique depending only on $\mathcal{T}$.
\end{enumerate}
\end{theorem}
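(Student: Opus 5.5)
The plan is to reduce the statement to the structure theory of alternating bilinear forms on finite abelian groups. The mixed-register Pauli group mod phases $\hat{\mathcal{P}}$ is the finite abelian group $\bigoplus_h \mathbb{Z}_{q_h}^2$, so $\mathcal{T}$ is a finite abelian group of exponent dividing $Q$. We define $\beta(P,P')=\phi(P)\tilde{\odot}\phi(P') \bmod 1$. Each register contributes a multiple of $1/q_h$, so $\beta$ takes values in $\frac{1}{Q}\mathbb{Z}/\mathbb{Z}\cong\mathbb{Z}_Q$. The map $\beta$ is well defined on classes mod phases, since $\tilde{\oplus}$ only changes entries by multiples of $q_h$ and this changes $\beta$ by integers. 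It is biadditive, and it is alternating: $\beta(P,P)=0$. Thus $\beta$ is an alternating $\mathbb{Z}_Q$-valued form on $\mathcal{T}$.

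We let $\mathcal{W}=\mathrm{rad}(\beta)=\{W\in\mathcal{T}:\beta(W,P)=0\ \forall P\in\mathcal{T}\}$, take $W_1,\dots,W_\ell$ to be any generators of $\mathcal{W}$, and pass to $\bar{\mathcal{T}}=\mathcal{T}/\mathcal{W}$. On $\bar{\mathcal{T}}$ the induced form is nondegenerate, because $P\mapsto\beta(P,\cdot)$ is injective into $\mathrm{Hom}(\bar{\mathcal{T}},\mathbb{Q}/\mathbb{Z})$. Since both sides have the same order, this map is an isomorphism. The core step is then a symplectic Smith normal form, proved by induction on $|\bar{\mathcal{T}}|$:
\begin{enumerate}
\item Let $m$ be the exponent of $\bar{\mathcal{T}}$ and choose $U$ of order $m$.
\item By nondegeneracy, the character $\beta(U,\cdot)$ has order $m$, so some $V$ satisfies $\beta(U,V)=1/m$.
\item The subgroup $H=\langle U,V\rangle$ is then a hyperbolic plane $\mathbb{Z}_m^2$, and the restricted form is nondegenerate on it.
\item Since the restriction is nondegenerate, $\bar{\mathcal{T}}=H\oplus H^{\perp}$. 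Concretely, project $P\mapsto P-m\beta(P,V)\,U+m\beta(P,U)\,V$ (with the integer lifts chosen consistently).
\item Recurse on $H^\perp$.
\end{enumerate}
Because $m$ is the exponent, all later $d_j$ divide $m$. Listing the pairs in reverse order of extraction gives $d_1\mid d_2\mid\cdots\mid d_c$, with $d_c=m\mid Q$. The statement writes the first pair as $U,V$, matching $U_1,V_1$.

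Next we lift to $\mathcal{T}$. We pick representatives $U_i,V_i\in\mathcal{T}$ of the classes obtained above. Conditions 1(b)–(d) hold because they depend only on the classes mod $\mathcal{W}$. Condition 1(a) holds by the definition of $\mathcal{W}$. The set $\{W_j\}\cup\{U_i,V_i\}$ generates $\mathcal{T}$, because the lifts generate modulo $\mathcal{W}$ and the $W_j$ generate $\mathcal{W}$. Orders and any non-splitting of $\mathcal{W}$ inside $\mathcal{T}$ do not matter here, since the statement asks only for a generating set.

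For uniqueness, $\mathcal{W}$ is intrinsic, so the isomorphism type of $\bar{\mathcal{T}}$ is determined by $\mathcal{T}$. The decomposition gives $\bar{\mathcal{T}}\cong\bigoplus_i\mathbb{Z}_{d_i}^2$. For $\bar{\mathcal{T}}$ to be exactly this direct sum, we need the $U_i,V_i$ classes to have order exactly $d_i$. This follows from $\beta(U_i,V_i)=1/d_i$ together with nondegeneracy on each plane: the plane's exponent is $d_i$, since the order of $U$ equals $m$ on the first step and the same holds inductively. By the uniqueness of invariant factors for finite abelian groups, the multiset $\{d_i,d_i\}$ is determined, hence so is $\{d_i\}$.

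I expect the main obstacle to be the inductive splitting step. One must verify that an element of maximal order pairs to $1/m$ with some $V$, which needs nondegeneracy and the duality isomorphism above. One must also check that the projection formula is well defined with the integer lifts modulo $m$, so that $\bar{\mathcal{T}}=H\oplus H^\perp$ holds genuinely. This requires care when $\bar{\mathcal{T}}$ has mixed $p$-parts. If that becomes messy, the fallback is to decompose $\bar{\mathcal{T}}$ into Sylow $p$-parts, which are $\beta$-orthogonal because their orders are coprime. We would run the argument over each $\mathbb{Z}_{p^a}$, where maximal order is cleaner, and then recombine the $p$-primary pairs via the Chinese remainder theorem into the divisibility chain.
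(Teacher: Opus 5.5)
Your proposal is correct and follows essentially the same route as the paper's proof. Both quotient by the radical of the alternating form and pick an element $U$ of maximal order $m$ (the exponent). Both then use nondegeneracy to find $V$ with $\beta(U,V)=1/m$ and split off $\langle U,V\rangle\cong\mathbb{Z}_m^2$ via the same projection $A\mapsto A\,(U^{a_A}V^{b_A})^{-1}$, then recurse, lift back to $\mathcal{T}$, and read off uniqueness from the invariant factors of $\mathcal{T}/\mathcal{R}(\mathcal{T})\cong\bigoplus_i\mathbb{Z}_{d_i}^2$. Two small refinements: your observation that each later $d_j$ divides the current exponent, so reversing the extraction order gives the chain, justifies the divisibility condition more cleanly than the paper's appeal to reordering. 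Also, the worry about mixed $p$-parts is unnecessary: since $m$ is the exponent, $\beta(P,V)\in\frac{1}{m}\mathbb{Z}/\mathbb{Z}$, so $m\beta(P,V)$ and $m\beta(P,U)$ are well-defined residues mod $m=|U|=|V|$, and no Sylow fallback is needed.
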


Due to the length of the proof, we have placed this result within the appendix. This result provides a standard way of decomposing Pauli operators even with mixed-register devices. It essentially "factors out" the symplectic non-commutativity from the subgroup and expresses it in the form of an external direct product of cyclic groups which occur in pairs (a pair being characterized by equal order). A pair of generators that generates the cyclic groups in a pair is called a hyperbolic pair. The final decomposition result in \eqref{eq:main_decomp} is that the subgroup decomposes into a completely commuting "isotropic" part, and a "symplectic" part generated by hyperbolic pairs, written as an external direct product of cyclic groups.

\subsection{No-go results}

Here we show a pair of no-go results on mixed-register stabilizer codes. These rule out the possibility of non-trivially encoding oscillators, or rotors, with finite registers, as well as the possibility of encoding coprime local-dimensional codes. We begin by showing the following result that has likely been shown before:
\begin{lemma}\label{supergens}
Let $g$ be the generators for the stabilizer code $S$, and let $s$ be a superset of $g$ that can also generate $S$. Then the following projectors are all equal: $\Pi_S=\Pi_g=\Pi_s$.
\end{lemma}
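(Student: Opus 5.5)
The plan is to fix a concrete formula for the stabilizer projector and check that it depends only on the group $S$, not on which generating set is used. For a finite stabilizer group $S$ (the mixed-register Pauli group mod phases restricted to a commuting set with consistent phases, so that $-I\notin S$ in the appropriate sense), I will define
\begin{equation}
\Pi_S=\frac{1}{|S|}\sum_{P\in S}P .
\end{equation}
For a generating set $g=\{g_1,\ldots,g_m\}$ I will use the product form
\begin{equation}
\Pi_g=\prod_{i=1}^m \frac{1}{r_i}\sum_{a=0}^{r_i-1} g_i^{a},
\end{equation}
where $r_i$ is the order of $g_i$. Since each generator acts on registers of dimension $Q_h$, this order is the least common multiple of the orders on each register, which is finite when every $Q_h$ is finite. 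Each factor is the projector onto the $+1$ eigenspace of $g_i$, and the factors commute because the generators commute. Hence $\Pi_g$ is the projector onto the joint $+1$ eigenspace of the $g_i$.

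First I show that $\Pi_g=\Pi_S$. Expanding the product gives a sum over tuples $(a_1,\ldots,a_m)$ of $g_1^{a_1}\cdots g_m^{a_m}$, with weight $1/\prod_i r_i$. Every element of $S$ arises this way. The map from tuples to $S$ is a surjective group homomorphism from $\prod_i\mathbb{Z}_{r_i}$ onto the abelian group $S$. Consequently each element of $S$ is hit exactly $|\ker|=\prod_i r_i/|S|$ times, and the sum collapses to $\frac{1}{|S|}\sum_{P\in S}P$. This step needs the group elements to be exactly the operators with fixed phases; I will note the standard requirement that $S$ contain no nontrivial multiple of the identity, since otherwise the code space is empty and all projectors vanish anyway.

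The same argument applies verbatim to any generating set, including the superset $s$. So $\Pi_s=\Pi_S=\Pi_g$.

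I will also give the eigenspace viewpoint as a cross-check. A vector fixed by every $g_i$ is fixed by every product of them, hence by every element of $s$, because $s\subseteq S=\langle g\rangle$. Conversely, fixing every element of $s$ includes fixing $g\subseteq s$. Thus the joint $+1$ eigenspaces coincide, and orthogonal projectors with equal ranges are equal.

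The main obstacle is the infinite-dimensional case ($Q_h=\infty$), where orders may be infinite and $|S|$ is infinite. There I will rely only on the eigenspace argument, defining $\Pi$ as the orthogonal projector onto the common fixed subspace. The group-averaging formula would need to be replaced by an averaging over a suitable invariant mean, and I will not pursue that.
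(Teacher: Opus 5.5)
Your proof is correct, but your main argument differs from the paper's.

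The paper takes $\Pi_S$ to be the projector onto the joint $+1$ eigenspace and notes that the product $\Pi_g$ of the commuting single-generator projectors is exactly that projector. It then handles the superset $s$ by absorption. Each redundant element of $s$ is a product of generators, so its factor $\frac{1}{\mathsf{ord}(s_i)}\sum_j s_i^j$ acts as the identity on the range of $\Pi_g$. Commutativity and idempotence then collapse $\Pi_s$ back to $\Pi_g$. This is essentially your \emph{cross-check} paragraph, which you state more cleanly as equality of joint fixed subspaces.

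Your primary argument instead expands the product. It uses that $(a_1,\ldots,a_m)\mapsto g_1^{a_1}\cdots g_m^{a_m}$ is a surjective homomorphism $\prod_i\mathbb{Z}_{r_i}\to S$ whose fibres all have size $|\ker|$. Hence every generating set, redundant or not, produces the same group average $\frac{1}{|S|}\sum_{P\in S}P$.

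Your route buys two things:
\begin{enumerate}
\item A manifestly generator-independent closed form for $\Pi_S$.
\item The correct normalization $|S|$, and with it the trace computation $\mathsf{tr}(\Pi_S)=\prod_h Q_h/|S|$ that the paper performs later in Lemma~\ref{mixedlog}.
\end{enumerate}

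It requires $S$ to be finite. Both routes also silently need consistent phases, namely $g_i^{r_i}=I$ exactly and exact commutation, i.e.\ no nontrivial scalar in $S$. Otherwise the factors are not projectors and your homomorphism is not well defined. You make this assumption explicit; the paper leaves it implicit.

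The eigenspace argument needs only that each factor is a projector whose range contains the common fixed space. It is therefore the version that extends to infinite registers, as you observe, once $\Pi$ is defined spectrally rather than by averaging.
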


\begin{proof}
Note that $\Pi_S=\Pi_g$ is by definition true from $\Pi_g=\prod_{i=1}^{|g|}\frac{1}{\mathsf{ord}(g_i)}\sum_{j=0}^{\mathsf{ord}(g_i)-1}g_i^j$ being a projector onto the $+1$ eigenspace, which is $\Pi_S$. The only new portion is verifying equality with $\Pi_s$. Next, let there be redundant generators so that $g\subset s\subset S$, then $\Pi_s=\prod_{i=1}^{|s|}\frac{1}{\mathsf{ord}(s_i)}\sum_{j=0}^{\mathsf{ord}(s_i)-1}s_i^j=\Pi_g$ as each term in the product, $\frac{1}{\mathsf{ord}(s_i)}\sum_{j=0}^{\mathsf{ord}(s_i)-1}s_i^j$, is itself a projector, so we may break each redundant element into the product of generators, and due to the commuting nature and projectors squaring to themselves, reduces back to $\Pi_g$.
\end{proof}

As a trivial example, if a code has disjoint subsets then this is equivalent to two separate stabilizer codes. Of particular note then is the case where these subsets have some nontrivial overlap between them, indicating entanglement betwixt the parts of the system. As a first use of this tool we will show that true oscillator-qudit mixed-register codes must be disjoint subsets, thus meaning that a stabilizer code cannot encode logical oscillators (or rotors) along with qudits. Instead the oscillators must become finitely quantized, or a broader framework such as subsystem codes or approximate codes must be employed.

For a true oscillator or rotor, the order of the operators must be unbounded, so that we emulate the structure of $\mathbb{R}$ or $\mathbb{Z}$. Further, we must have the operators in the limit of infinity still induce a nontrivial phase. To satisfy both of these conditions we may treat the local-dimension as being some irrational value multiplied by an integer. This is isomorphic to the integers and dense in the real numbers. For concreteness we will pick a local-dimension of $\frac{1}{\sqrt{2}}$.

\begin{lemma}[No true oscillator-qudit codes]
Let $S$ be a mixed-register stabilizer code with registers $1$ to $m$ having local-dimension corresponding to either $\mathbb{R}$ or $\mathbb{Z}$, while registers $m+1$ to $n$ have finite local-dimension values. Then $S$ must be constructed from a disjoint code on registers $1$ to $m$, which must be separable from the remaining registers.
\end{lemma}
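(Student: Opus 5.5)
The plan is to read the commutation condition of the definition of mixed-register stabilizer codes register by register. The oscillator/rotor registers are modeled with local-dimension $\frac{1}{\sqrt{2}}$, so their symplectic contributions are irrational. The finite registers contribute only rationals with denominators dividing $Q=\mathsf{lcm}(Q_{m+1},\ldots,Q_n)$. The goal is to show that these two kinds of contributions can never cancel against each other modulo $1$. This forces each stabilizer, after passing to suitable powers and products, to act on only one side.

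For $s_i,s_j\in S$ we split the generalized symplectic product as
\begin{equation}
\phi(s_i)\tilde{\odot}\phi(s_j)=\sqrt{2}\,A_{ij}+B_{ij},
\end{equation}
where $A_{ij}\in\mathbb{Z}$ (or $\mathbb{R}$ in the CV case) collects the symplectic form on registers $1$ to $m$. The term $B_{ij}=\sum_{h>m}\frac{1}{Q_h}(\cdots)\in\frac{1}{Q}\mathbb{Z}$ collects the remaining registers.

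Commutation requires $\sqrt{2}A_{ij}+B_{ij}\in\mathbb{Z}$. In the rotor/ISL case $A_{ij}\in\mathbb{Z}$, so rational independence of $1$ and $\sqrt2$ gives $A_{ij}=0$, and then $B_{ij}\in\mathbb{Z}$. In the CV case the $X$-powers on registers $1$ to $m$ are real. There I would first restrict to the integer-power lattice, as in the $\phi$ representation via \cite{gunderman2024stabilizer}, and argue that a real $A_{ij}$ with $\sqrt2 A_{ij}\in \mathbb{Z}-B_{ij}\subset\frac1Q\mathbb{Z}$ forces $A_{ij}\in\frac{1}{\sqrt2 Q}\mathbb{Z}$. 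I would then absorb that term into a redefinition that still decouples the parts. This CV case is the step I expect to be delicate, and I would first try to reduce it to the integer case by a rescaling of quadratures.

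The conclusion of this step is that the oscillator parts of all stabilizers pairwise commute on their own, and so do the finite parts.

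Next I use the order structure. Any $s\in S$ writes as $s=s^{\infty}\otimes s^{f}$. The finite part satisfies $(s^f)^Q=I$, so $s^Q=(s^{\infty})^Q\otimes I$ lies in $S$ and is supported only on registers $1$ to $m$. If $s^\infty\neq I$, then $(s^\infty)^Q\neq I$ because the oscillator Paulis have infinite order. Conversely, for the finite side I would like $I\otimes s^f\in S$. Because $s^{\infty}$ has infinite order, no power isolates $s^f$ directly. Instead I would use the $+1$ eigenspace condition. On the codespace, $(s^\infty)^Q$ acts as $+1$, and an oscillator operator whose $Q$-th power stabilizes the state has its spectrum restricted to $Q$-th roots of unity. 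Combined with the oscillator parts generating a commuting set by themselves, this should show that adjoining the sets $\{(s^\infty)^Q\}$ and $\{s^f\}$ does not change the projector.

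To compare projectors I would invoke Lemma~\ref{supergens}. Adding the redundant generators $(s^\infty)^Q$, which already lie in $S$, leaves $\Pi_S$ unchanged. This lets me rewrite $\Pi_S$ as a product of a projector on registers $1$ to $m$ and a projector on registers $m+1$ to $n$.

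The main obstacle is the final step: the $Q$-th power kills only the finite part, so the genuine work is showing that the residual $\mathbb{Z}_Q$-valued correlation between $s^\infty$ and $s^f$ cannot survive. For this I would use that a true oscillator/rotor logical must retain unbounded order. A nontrivial finite-order correlation would make the oscillator logical effectively finitely quantized, contradicting the hypothesis. The result is therefore a code disjoint on registers $1$ to $m$ and separable from the rest.
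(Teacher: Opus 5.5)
Your first step, writing $\phi(s_i)\tilde{\odot}\phi(s_j)=\sqrt{2}A_{ij}+B_{ij}$ and using the rational independence of $1$ and $\sqrt{2}$ to force $A_{ij}=0$ and $B_{ij}\in\mathbb{Z}$, is exactly the paper's argument. In the paper it is essentially the entire proof: disjointness is asserted directly from it. Your treatment of real powers, however, would fail. With real exponents the cancellation genuinely occurs. For example, $X_\infty\otimes X_2$ and $Z_\infty^{1/(2\sqrt{2})}\otimes Z_2$ have oscillator contribution $\sqrt{2}\cdot\frac{1}{2\sqrt{2}}=\frac12$ and qubit contribution $\frac12$. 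So they commute even though neither part commutes on its own, and no rescaling of quadratures removes this, since the phase $-1$ between the oscillator parts is intrinsic. The paper sidesteps this by modelling both $\mathbb{R}$ and $\mathbb{Z}$ registers with integer powers and local-dimension $\frac{1}{\sqrt{2}}$. You should adopt that model rather than try to reduce to it.

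The real gap is the passage from ``each side commutes on its own'' to separability, and the bridge you propose is false. Take $S=\langle X_\infty\otimes Z_2\rangle$, a single generator and hence trivially a code. It contains $X_\infty^2\otimes I$ but neither $X_\infty\otimes I$ nor $I\otimes Z_2$. On the codespace $X_\infty^2=I$, as you say, but this only forces $X_\infty$ and $Z_2$ to take the same value $\pm1$, not the value $+1$. Adjoining $I\otimes Z_2\notin S$ gives $\langle X_\infty,Z_2\rangle$, which removes the sector $\{X_\infty=-1\}\otimes|1\rangle$. So Lemma~\ref{supergens} does not apply, and the projector does change.

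Appealing to unbounded-order oscillator logicals does not rescue this. It is not a hypothesis of the lemma, and adding an untouched spectator oscillator produces such a logical while keeping the correlation. The phenomenon also survives the paper's implicit requirement that single oscillator errors be detected, e.g.\ $\langle X_{\infty,1}X_{\infty,2}\otimes Z_2,\ Z_{\infty,1}Z_{\infty,2}^{-1}\otimes I\rangle$.

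Write $\pi_\infty(S),\pi_f(S)$ for the restrictions of $S$ to registers $1$ to $m$ and $m+1$ to $n$. What your first step actually proves is that $S$ has finite index in $S'=\pi_\infty(S)\times\pi_f(S)$, with $S'/S\cong\pi_f(S)/\{f:\ I\otimes f\in S\}$, and that $S$ and $S'$ have the same centralizer. Consequently $\Pi_S=\sum_\chi\Pi_{\infty,\chi}\otimes\Pi_{f,\chi}$, with $\chi$ running over characters of $S'$ trivial on $S$. This is a finite direct sum of product codes, coupled only through central (classical) logicals. It is a single tensor product only when $S=S'$, so the stated conclusion requires an additional hypothesis of that kind. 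The step cannot be closed as stated, and the paper's proof passes over it without justification.
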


\begin{proof}

By supposition there must be at least one pair of generators which can detect a single infinite $X$ and $Z$ operator. For the sake of contradiction, let us assume that the code is not disjoint and thus must have a trivial total generalized symplectic product. Given this, using the generalized symplectic product, $\tilde{\odot}$, we compute the symplectic product between the generators, which must include some integer multiple of $\frac{1}{\sqrt{2}}$. However, all finite valued local-dimensions will only contribute rational numbers. From this, we cannot generate generate a symplectic product that is $0$ as needed in the Pauli representation, nor an integer multiple if using the Heisenberg operator representation. This contradicts this assumption, therefore the code must be disjoint.
\end{proof}

Summary: $\Pi_{\text{mixed}}=\Pi_\infty\otimes \Pi_{\text{finite}}$, thus when describing the logical space, we can describe it as a pair of values $k_\infty$, the number of infinite sized registers encoded, and $K_{\text{finite}}$ which is the total number of orthonormal logical bases. An important caveat to the prior that this does not forbid using discretized versions of these infinite spaces, such as using a finite-GKP-like set of registers along with traditional qudit registers. A further impact of this is that if rotor registers are used for the code, the local-dimension of these registers would need to be a factor of the least common multiple of the other registers, thus putting a limit on the phase space separation possible between codewords.

\begin{lemma}[Coprime codes must be disjoint]
Let $S$ be a mixed-register code with local-dimensions on the first $m_1$ registers having local-dimension $Q_1$, the next $m_2$ registers having local-dimension $Q_2$, and so forth, with each local-dimension being coprime to all others. Then $S$ is a concatenation of a length $m_1$ code with local-dimension $Q_1$ with a length $m_2$ code with local-dimension $Q_2$ and so forth. Therefore forbidding a true mixed coprime-dimensional stabilizer code.
\end{lemma}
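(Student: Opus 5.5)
The plan is to show that every element of $S$ factors as a product of stabilizers each supported on a single coprime block, and then use Lemma \ref{supergens} to split the projector. Write $M=\prod_t Q_t$ and $M_t=M/Q_t$. Since the $Q_t$ are pairwise coprime, the Chinese remainder theorem gives integers $e_t$ with $e_t\equiv 1\pmod{Q_t}$ and $e_t\equiv 0\pmod{Q_{t'}}$ for all $t'\neq t$; concretely, $e_t=M_t\,(M_t^{-1}\bmod Q_t)$. For a generator $s\in S$, write $\phi(s)=(v_1,v_2,\ldots)$ by blocks. Because each block is taken modulo its own $Q_t$, the element $s^{e_t}\in S$ has $\phi(s^{e_t})=(0,\ldots,0,v_t,0,\ldots)$. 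In words, $s^{e_t}$ is exactly the restriction $s|_t$ of $s$ to block $t$. We also have $\prod_t s^{e_t}=s^{\sum_t e_t}=s$, because $\sum_t e_t\equiv 1$ modulo every $Q_t$. So $s|_t\in S$ for every $t$.

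Next we check that the restrictions give a valid stabilizer code on each block. Take two generators $s,r\in S$. Since $s|_t$ and $r|_t$ both lie in $S$, which is commuting, we get $\phi(s|_t)\tilde{\odot}\phi(r|_t)\equiv 0 \pmod 1$. This inner product involves only the registers of block $t$. Hence the set $S_t=\{s|_t: s\in S\}$ is a commuting set of Pauli operators of local-dimension $Q_t$ on the $m_t$ registers of block $t$, i.e.\ a stabilizer code of length $m_t$.

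As a sanity check, this is consistent with the generalized symplectic product. The block-$t$ contribution to $\phi(s)\tilde{\odot}\phi(r)$ is $a_t/Q_t$ with $a_t$ an integer. Commutation of $s$ and $r$ requires $\sum_t a_t/Q_t\in\mathbb{Z}$; multiplying by $M$ and reducing modulo $Q_t$ gives $a_tM_t\equiv 0\pmod{Q_t}$, so $Q_t\mid a_t$ by coprimality. Thus each block commutes separately even before we pass to powers, though the power argument above is what places $s|_t$ inside $S$.

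Finally, let $g$ be a generating set of $S$. Then $s=g\cup\bigcup_t\{g_i|_t\}$ is a superset of $g$ that is contained in $S$ and generates $S$. Since each $g_i$ is the product of its restrictions $g_i|_t$, the restrictions alone already generate $S$. By Lemma \ref{supergens}, $\Pi_S=\Pi_s=\prod_t\Pi_{S_t}=\bigotimes_t\Pi_{S_t}$, because the $S_t$ act on disjoint registers. So $S$ is the tensor product (the ``concatenation'' of the statement) of the block codes $S_t$, and no codeword can be entangled across coprime blocks. The only subtle step is justifying $s^{e_t}=s|_t$ including phases. Since we work with Paulis mod phases and with the $+1$-eigenspace convention, I would check that the phase of $s^{e_t}$ can be absorbed. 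If it cannot, the statement only holds up to a choice of sign/eigenvalue sector, which still gives a product of projectors onto eigenspaces of the $S_t$.
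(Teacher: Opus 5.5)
Your proof is correct and is essentially the paper's argument. The paper adjoins the powers $(\mathcal{Q}/Q_i)\phi(S)$ (for two blocks, $Q_1\phi(S)$ and $Q_2\phi(S)$); by coprimality each is supported on a single block and together they regenerate $S$, after which Lemma~\ref{supergens} splits $\Pi_S$ into a tensor product. Your CRT idempotents $e_t$ are a normalized choice of those multipliers that gives the exact block restrictions, and your phase worry is harmless: each $s^{e_t}$ lies in $S$ with whatever phase it carries, so the block groups are themselves valid stabilizer groups.
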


Use caution with the prior, as we are using concatenated to mean one after the other (string concatenation, so separable tensor products). When we intend a code's logical registers as input to another code, the term composition of codes will be used. Further, note that this prior result is a little more general than as would first seem. While a proof with all dimensions being primes would be sufficient, this goes beyond and rules out composites if they are all coprime.

\begin{proof}
Let the symplectic representation of this code be given by $\phi_\infty(S)$, equipped with the generalized symplectic product operator $\tilde{\odot}$. Further, define $\mathcal{Q}=\mathsf{lcm}(Q_1,Q_2,\ldots Q_t)$ as the least common multiple of all the local-dimensions being used. We begin with the case of two coprime local-dimensions $Q_1$ and $Q_2$. We augment the matrix $\phi_\infty(S)$ with redundant rows given by $Q_1\phi_\infty(S)$, which will by definition eliminate the support within the first $m_1$ registers. We then augment the code further with $Q_2\phi_\infty(S)$, which will by definition eliminate support within the $m_2$ registers. From lemma \ref{supergens}, these checks still generate the same codespace, but, further, notice that these form a generating set for the original code $S$. We may then eliminate from our augmented code the original generators and use $Q_1\phi_\infty(S)$ and $Q_2\phi_\infty(S)$ as generators for the codespace. Importantly, these have disjoint support, and thus the projector onto the codespace will break into a tensor product over these registers---implying the code is a $m_1$ length $Q_1$ local-dimension stabilizer code followed by a $m_2$ length $Q_2$ local-dimension stabilizer code. In short the code has been transformed into the following form:
\begin{equation}
    \phi(S)=\begin{bmatrix}
        S_{Q_1,x}& 0&|&S_{Q_1,z} & 0\\
        0 & S_{Q_2,x}&|&0&S_{Q_2,z}
    \end{bmatrix}\equiv \phi(S_{Q_1})\otimes \phi(S_{Q_2}).
\end{equation}
To handle more than two coprimes, instead of multiplying by $Q_1,Q_2$, we multiply by each of $\mathcal{Q}/Q_i$, from which the result carries the same reasoning.
\end{proof}

\section{Coding-theoretic results}\label{coding}

Before we dive into our new results related to constructing mixed-register stabilizer codes, there are a couple of nuances we wish to emphasize. To start, recognize that in the process of trivializing the generalized symplectic product, a common denominator can be formed (the least common multiple), which may suggest that the local-dimension of some registers must be the least common multiple. This is not quite so. Consider the following example $\langle X_6^3Z_5^3,Z_6X_5\rangle$. These don't commute so the further register(s) will need to fix that. The generalized symplectic product has absolute value $1/10$. While we could use a $30$ level register, that would have a torsion that would generate undetectable errors (multiples of $10$), so we must instead use a $10$ level register. Generally, if the numerator happens to be a divisor of the least common multiple, we generate nontrivial torsion. So what we actually have is that the registers must have $Q$ such that $Q$ divides the LCM. For the same reason, selecting a multiple of the LCM for the local-dimension would be permissible, but again would generate distance reducing torsion. Let us begin with a small example of a code to ground our further discussions.

\begin{example}
    As a simple example we consider the code generated by $\langle X_2X_6^3I,IX_6^2X_3\rangle$, which can \textit{detect} a single phase error. This has logical operators given by $\bar{X}=\{X_2II,IIX_3\}$, or simply $\bar{X}=IX_6I$, which suffices from Bezout's identity for coprime numbers. The logical phase operators are given by $\bar{Z}=\{Z_2Z_6^{-3}I,IZ_6^{-2}Z_3\}$. A member of the logical space is given by \begin{equation}
|'0'\rangle_L=\frac{1}{\sqrt{6}}(|000\rangle+|130\rangle+|021\rangle+|042\rangle+|151\rangle+|112\rangle),    
\end{equation}
With the other codewords generated by the logical $X$ operators---here $IX_6I$ powers suffice. Notably this vector is not separable within any of the constituent register values, thus they are entangled. Notably we labeled the logical state as $'0'$ as it does not fully correspond to a qubit, qutrit, nor quhex state. 
\end{example}

Given that we are considering codes with registers of possibly differing dimensions and Pauli operators of possibly differing orders, we need a more general result for computing the total computational space available.

\begin{lemma}\label{mixedlog}
    Let $D$ be a mixed-register quantum device. Let $S$ be the generators for a stabilizer code on $D$ with only finite local-dimensions for each value of $\mathsf{Regdim}(D)$. Then the full logical subspace has $K$ orthonormal bases with $K=\prod_{i=1}^n \mathsf{Regdim}_i(D)/\mathcal{D}$ where $\mathcal{D}=\prod_{p_i\in \mathsf{Regdim}(D)}p_i^{|\{s\in S:\ s^{p_i}=I\}|}$.
\end{lemma}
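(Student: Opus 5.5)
Here is the plan: compute $K$ as the trace of the codespace projector and reduce everything to counting the order of the stabilizer group. Let $N=\prod_{i=1}^n \mathsf{Regdim}_i(D)$ be the total Hilbert-space dimension. By Lemma~\ref{supergens} we may choose any convenient generating set, so we first put $S$ in a canonical form in which $\Pi_S=\prod_i \frac{1}{\mathsf{ord}(s_i)}\sum_{j} s_i^j$ factors cleanly. Expanding this product, $\Pi_S=\frac{1}{|\langle S\rangle|}\sum_{g\in\langle S\rangle} g$, provided the generators are chosen so that the group they generate is the internal direct product of the cyclic groups $\langle s_i\rangle$. Here we assume, as is implicit for a stabilizer code, that $-I$-type phases are excluded so that $\langle S\rangle$ contains no nontrivial multiple of the identity. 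Every non-identity generalized Pauli on $D$ is traceless, because some tensor factor is $X_Q^aZ_Q^b\neq I$, which has zero trace. Hence $K=\mathrm{tr}\,\Pi_S=N/|\langle S\rangle|$, and it remains to show $|\langle S\rangle|=\mathcal{D}$.

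To compute $|\langle S\rangle|$ I would pass to the $\phi$ representation. There $\langle S\rangle$ is a finite subgroup (submodule) of $\bigoplus_i \mathbb{Z}_{Q_i}^2$, hence a finite abelian group. Decompose it into primary components $\langle S\rangle=\bigoplus_{p}\langle S\rangle_p$ over the primes $p$ dividing the $Q_i$. This is the same trick as in the coprime-disjointness lemma, where one multiplies by $\mathcal{Q}/p^{e}$ to isolate the $p$-part, and it is legitimate by Lemma~\ref{supergens}. Within each $\langle S\rangle_p$ I would then refine by a Smith-normal-form style reduction, in the spirit of Theorem~\ref{thm:decomposition}, to a generating set of prime-order elements. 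The key observation is that a cyclic factor $\mathbb{Z}_{p^e}$ contributes $p^e$. In a generating set refined to elementary pieces, say the $\mathbb{F}_p$-layers $p^{j}\langle S\rangle_p/p^{j+1}\langle S\rangle_p$, that factor is counted as $e$ generators each satisfying $s^{p}=I$ at the appropriate layer. Reading $\mathcal{D}=\prod_p p^{\#\{s:\,s^p=I\}}$ with $S$ taken as such a reduced, independent generating set, the product is exactly $\prod_p|\langle S\rangle_p|=|\langle S\rangle|$.

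The main obstacle is this last step. It amounts to making precise the sense in which the counting $|\{s\in S: s^{p_i}=I\}|$ is taken, since for an arbitrary generating set (with redundancy or with higher prime-power orders) the formula is false as literally written. I would therefore state explicitly that $S$ is first brought, via Lemma~\ref{supergens} and the primary decomposition, into an independent generating set whose elements have prime order. Then $|\langle S\rangle|=\prod_{s\in S}\mathsf{ord}(s)=\mathcal{D}$ directly, with the product interpreted over distinct primes $p_i$ appearing in the register dimensions. The remaining checks — independence giving the direct-product structure, and tracelessness on mixed registers — are routine.
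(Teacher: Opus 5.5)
\textbf{A step that fails.} Your skeleton matches the paper's proof. You write the codespace projector as the group average $\frac{1}{|\langle S\rangle|}\sum_{g\in\langle S\rangle}g$, use that every non-identity generalized Pauli is traceless, and conclude $K=N/|\langle S\rangle|$. That part is sound. The paper then simply asserts that the normalization equals $\mathcal{D}$.

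The gap is in your last step. It is not true that Lemma~\ref{supergens} plus primary decomposition always lets you replace $S$ by an independent generating set of prime-order elements. Primary decomposition only separates distinct primes. Inside $\langle S\rangle_p$, a Smith-normal-form reduction yields cyclic factors $\mathbb{Z}_{p^{e}}$. For $e\ge 2$ such a factor is indecomposable, so no independent prime-order generating set exists.

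\textbf{Counterexample.} Take one register of dimension $4$ and $S=\{Z_4\}$. Then $\langle S\rangle\cong\mathbb{Z}_4$ and $K=1$. But any independent generating set is a single element of order $4$. With $p_i$ read as primes, as you do, this gives $|\{s\in S: s^2=I\}|=0$ and hence $\mathcal{D}=1$.

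\textbf{The layer device does not rescue it.} Your $\mathbb{F}_p$-layer count correctly gives $|\langle S\rangle_p|=\prod_j|p^j\langle S\rangle_p/p^{j+1}\langle S\rangle_p|$. However, the layers are quotients, not subgroups of $\langle S\rangle$. For a factor $\langle g\rangle\cong\mathbb{Z}_{p^e}$, lifting their bases gives $g,pg,\dots,p^{e-1}g$, of orders $p^e,\dots,p$. Only the last of these satisfies $s^p=I$. So the layer count is not the quantity $\mathcal{D}$ in the statement.

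\textbf{What your argument does prove.} It establishes the lemma under the extra hypothesis that every $\langle S\rangle_p$ has exponent $p$. This holds, for instance, whenever every $\mathsf{Regdim}_i(D)$ is squarefree. Then $\bigoplus_i\mathbb{Z}_{Q_i}^2$ has squarefree exponent, so each $\langle S\rangle_p$ is an $\mathbb{F}_p$-vector space. A basis of it is exactly the independent prime-order generating set you need, with $p_i$ read as the primes dividing the register dimensions.

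Outside that case the formula cannot be proved, because under your reading it is false, as the ququart example shows. What holds in full generality is $K=N/|\langle S\rangle|=N/\prod_{s\in S}\mathsf{ord}(s)$ for an independent generating set $S$. This is also consistent with how the lemma is invoked in the proof of Theorem~\ref{scannedconstruction} (``the order of the generators will not be altered'').

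You should therefore either state the squarefree (elementary-abelian) hypothesis explicitly, or replace $\mathcal{D}$ by $\prod_{s\in S}\mathsf{ord}(s)$. As written, your final sentence asserts a reduction that does not exist.
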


This is the mixed-register analog of the traditional finite dimensional computation of the logical subspace. For a uniform, even if composite, code the expression is more comprehensible \cite{gunderman2025beyond}.

\begin{proof}
Let $\mathcal{S}$ be the full stabilizer code. Then the projector onto the $+1$ eigenspace of the stabilizer group, and so the codespace, is given by:
\begin{equation}
    \Pi_\mathcal{S}=\frac{1}{\mathcal{D}}\sum_{s\in \mathcal{S}} s ,
\end{equation}
with $\mathcal{D}=\prod_{p_i\in \mathsf{Regdim}(D)}p_i^{|\{s\in S:\ s^{p_i}=I\}|}$ accounting for the normalization factor from the differing possible orders for the generators. This satisfies $\Pi_{\mathcal{S}}^2=\Pi_{\mathcal{S}}$. The dimension is then computed from the trace of this operator:
\begin{eqnarray}
\mathsf{tr}(\Pi_\mathcal{S})&=& \frac{1}{\mathcal{D}}\mathsf{tr}\Left(\sum_{s\in \mathcal{S}} s\Right)\\
&=& \frac{1}{\mathcal{D}} \mathsf{tr}(I)\\
&=& \frac{1}{\mathcal{D}} \prod_{i=1}^n \mathsf{Regdim}_i(D)=K,
\end{eqnarray}
where the second equality follows from all of the Pauli operators being traceless, except for the identity operator.
\end{proof}

We are now ready for our first construction, which is based on our canonical structural decomposition Theorem \ref{thm:decomposition}. Perhaps the simplest way to construct a mixed-register stabilizer code is to extend a collection of possibly non-commuting Pauli operators onto more registers and resolving all of the commutators. We take this approach for our first construction.

\bigskip
\refstepcounter{theorem}
\noindent\textbf{Construction~\thetheorem}~(Non-commutativity resolution).\label{constr:resolution}
A quintessential application of Theorem~\ref{thm:decomposition} would be to construct a mixed-register stabilizer code from a mixed-register Pauli subgroup with a nontrivial symplectic product matrix. Suppose we have a system like in the setup of Theorem~\ref{thm:decomposition}: A quantum device $\mathcal{D}$ with $n$ registers having local-dimensions $q_1,\ldots,q_n \geq 2$, the mixed-register Pauli group mod phases on $\mathcal{D}$, $\hat{\mathcal{P}}_n$, $Q=\mathsf{lcm}(q_1,\ldots,q_n)$, and some subgroup $\mathcal{T}\leq\hat{\mathcal{P}}_n$ generated by finitely many Paulis. Then, by Theorem~\ref{thm:decomposition}, we can write a generating set for $\mathcal{T}$ of the form
\begin{gather}
\{W_1,W_2,\dots,W_\ell,U_1,V_1,U_2,V_2,\dots,U_c,V_c\}.\label{eq:generators_list}
\end{gather}
If $c=0$, then $\mathcal{T}$ is already a mixed-register stabilizer code. Assume $c\ne 0$. Then the idea is to "resolve" the non-commutativity by introducing additional registers as follows:
\begin{enumerate}
    \item Suppose $\phi(U_i)\tilde{\odot}\phi(V_i)=\frac{1}{d_i}\mod 1,\text{ where }d_i\mid Q$. Then, augment $\mathcal{D}$ with $c$ registers having local-dimensions $d_1,d_2,\dots,d_c$.
    \item Suppose $C$ refers to the subsystem constituted by the $c$ newly introduced registers. Modify the generators listed in \eqref{eq:generators_list} to account for $C$ as follows: Augment each $W_i$ with the identity on $C$, augment each $U_i$ with the $X_{d_i}$ Pauli on the $i^{\text{th}}$ register in $C$ and the identity on the rest, augment each $V_i$ with the $\inverse{Z}_{d_i}$ Pauli on the $i^{\text{th}}$ register in $C$ and the identity on the rest. Define the new generators
    \begin{eqnarray}
        \bar{W}_i&=&W_i\otimes I,\\
        \bar{U}_i&=&U_i\otimes X_{d_i},\\
        \bar{V}_i&=&V_i\otimes\inverse{Z}_{d_i}.
    \end{eqnarray}
\end{enumerate}
The new generators satisfy the following relations (all mod 1):
\begin{enumerate}
    \item $\phi\Left(\bar{W}_i\Right)\tilde{\odot}\phi\Left(\bar{W}_j\Right)=0$ for all $i,j\in\{1,\dots,\ell\}$,
    \item $\phi\Left(\bar{W}_i\Right)\tilde{\odot}\phi\Left(\bar{U}_j\Right)=\phi\Left(\bar{W}_i\Right)\tilde{\odot}\phi\Left(\bar{V}_j\Right)=0$ for all $i\in\{1,\dots,\ell\}$, $j\in\{1,\dots,c\}$,
    \item $\phi\Left(\bar{U}_i\Right)\tilde{\odot}\phi\Left(\bar{U}_j\Right)=\phi\Left(\bar{V}_i\Right)\tilde{\odot}\phi\Left(\bar{V}_j\Right)0$ for all $i,j\in\{1,\dots,c\}$,
    \item $\phi\Left(\bar{U}_i\Right)\tilde{\odot}\phi\Left(\bar{V}_j\Right)=0$ for all $i\ne j$.
\end{enumerate}
These commutations are trivial because they follow directly from the commutation relations of the generators of $\mathcal{T}$ and the fact that the identity commutes with everything. The key commutation relation which shows the resolved non-commutativity is
\begin{eqnarray}
\phi\Left(\bar{U}_i\Right)\tilde{\odot}\phi\Left(\bar{V}_i\Right)&=&\phi(U_i)\tilde{\odot}\phi(V_i)+\frac{1}{d_i}(-1-0)\\
&=&\frac{1}{d_i}-\frac{1}{d_i}\\
&=&0.
\end{eqnarray}
Therefore, consider the mixed-register Pauli subgroup on $(n+c)$ registers generated by the new generators
\begin{gather}
\mathcal{S}=\Left\langle\bar{W}_1,\bar{W}_2,\dots,\bar{W}_\ell,\bar{U}_1,\bar{V}_1,\bar{U}_2,\bar{V}_2,\dots,\bar{U}_c,\bar{V}_c\Right\rangle\leq\hat{\mathcal{P}}_{n+c}.
\end{gather}
Then, $\mathcal{S}$ is a mixed-register stabilizer code.

\begin{remark}[Optimality of the above construction in terms of efficiency]\label{rmk:decomp_efficiency}
It is important to comment on the "efficiency" of this construction; i.e., whether a scenario is possible where one follows this construction and ends up adding two registers---a qubit and a qutrit---when they could've achieved the same outcome with one quhex and figuratively killed two birds with one stone. However, fortunately, such a scenario is not possible. The theorem guarantees that we add the minimum number of registers required to resolve the commutation, which is optimal. This is because of the divisibility chain condition (point (2)) in the statement of Theorem~\ref{thm:decomposition}. Suppose it does end up happening that the construction yields the addition of one qubit and one qutrit register; i.e., $d_i=2$ for some $i\in\{1,\dots,c-1\}$ and $d_j=3$ for some $j>i$ without loss of generality. Then, $d_i\nmid d_j$. Hence, this violates the divisibility chain condition. The qubit and the qutrit can instead be "amalgamated" into one quhex register to actually give a divisibility chain with $d_i=1$, $d_j=6$, and $d_i\mid d_j$. Consequently, the $d_i=1$ is omitted and we are left with one quhex instead of two (qubit and qutrit) registers. Here, the word "amalgamated" abstracts the precise mathematical reason this is guaranteed, which needs context from the proof, that the external direct product of a cyclic group of order 2 and a cyclic group of order 3 is isomorphic to a cyclic group of order 6. Hence, the invariant-factor form of the cyclic-group decomposition in \eqref{eq:final_K_cylic_decomp} replaces $\mathbb{Z}_2\oplus\mathbb{Z}_3$ with $\mathbb{Z}_6$. This, of course, holds for any two coprime dimensions. After all such amalgamations are made, we are left with a unique (by the uniqueness condition in Theorem~\ref{thm:decomposition}) multiset $\{d_1,d_2,\dots,d_c\}$ that is the most efficient for our $\mathcal{T}$.   
\end{remark}

Remark~\ref{rmk:decomp_efficiency} claims that Construction~\ref{constr:resolution} guarantees we add the minimum number of registers required to resolve the non-commutativity. Moreover, Proposition~\ref{prop:resolution_lb} provides an explicit general lower bound on the number of registers one must add to resolve the commutator. A natural question then becomes by how much does the minimum number of additional registers guaranteed by Construction~\ref{constr:resolution} exceed the hard lower bound given by Proposition~\ref{prop:resolution_lb}. The next result asserts that Construction~\ref{constr:resolution} in fact saturates the lower bound in Proposition~\ref{prop:resolution_lb}.

\begin{proposition}
Suppose $\mathcal{D}$ is a quantum device with $n$ registers having local-dimensions $q_1,\ldots,q_n \geq 2$, and $\hat{\mathcal{P}}$ is the mixed-register Pauli group mod phases on $\mathcal{D}$. Let $\mathcal{T}\leq\hat{\mathcal{P}}$ be any subgroup generated by finitely many Paulis. Let $\mathcal{S}$ be the mixed-register stabilizer code constructed from $\mathcal{T}$ using Construction~\ref{constr:resolution}, and suppose $c$ is the number of additional registers. Then,
\begin{gather}
c=\frac{1}{2}\mathsf{rank}[\tilde{\odot}],
\end{gather}
where $[\tilde{\odot}]$ is the generalized symplectic inner product matrix for $\mathcal{T}$ using the generating set from the construction.
\end{proposition}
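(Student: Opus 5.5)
The plan is to compute $[\tilde{\odot}]$ directly in the canonical generating set produced by Theorem~\ref{thm:decomposition}, and then read off its rank. Order the generators as $W_1,\dots,W_\ell,U_1,V_1,\dots,U_c,V_c$. Because all entries are taken mod $1$, we pick real representatives in $[0,1)$; any representative choice that is consistent with antisymmetry will work. Then conditions 1(a)--1(d) of Theorem~\ref{thm:decomposition} give the block structure
\begin{gather}
[\tilde{\odot}]=O_{\ell\times\ell}\oplus\bigoplus_{i=1}^{c}\mymatrix{0&\frac{1}{d_i}\\-\frac{1}{d_i}&0}.
\end{gather}
Here the $W_i$ rows and columns vanish by 1(a). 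The $U_i$--$U_j$ and $V_i$--$V_j$ entries vanish by 1(b). The cross entries $U_i$--$V_j$ with $i\neq j$ vanish by 1(c). The only nonzero entries are the $\pm\frac{1}{d_i}$ in the $2\times 2$ blocks.

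The next step is to argue that each $d_i>1$. Theorem~\ref{thm:decomposition} and the amalgamation discussion in Remark~\ref{rmk:decomp_efficiency} say that trivial factors $d_i=1$ are omitted. Such a factor would make the pair $(U_i,V_i)$ commute, and it would then be absorbed into the isotropic part among the $W$'s. Consequently each $\frac{1}{d_i}\not\equiv 0 \pmod 1$, so each $2\times2$ block has determinant $\frac{1}{d_i^2}\neq 0$ and rank $2$. Since the rank of a block-diagonal matrix is the sum of the ranks of its blocks, we get $\mathsf{rank}[\tilde{\odot}]=2c$, that is, $c=\frac{1}{2}\mathsf{rank}[\tilde{\odot}]$.

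Finally, Construction~\ref{constr:resolution} adds exactly one register per hyperbolic pair, of dimension $d_i$, so the number of added registers is exactly this $c$. This matches the lower bound of Proposition~\ref{prop:resolution_lb} applied to the same generating set, which shows the bound is saturated.

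The main obstacle is the meaning of rank for a matrix defined only mod $1$. For the rank to be well defined, we must fix the real representatives $\frac{1}{d_i}$ exactly as used in Proposition~\ref{prop:resolution_lb}, where the equation $\Gamma_ZD'\transpose{\Gamma}_X-\Gamma_XD'\transpose{\Gamma}_Z=[\tilde{\odot}_u]$ was treated over the reals. I would make this explicit by noting that the construction realizes this equation with $\Gamma_X,\Gamma_Z$ having a single nonzero entry per added column. I would also stress that the claim concerns the canonical generating set, not an arbitrary one, since the rank can change under a change of generators when the entries are only defined modulo $1$.
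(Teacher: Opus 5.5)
Your proposal is correct and follows essentially the paper's argument. Both proofs read off from conditions 1(a)--1(d) of Theorem~\ref{thm:decomposition} that the only nonzero entries of $[\tilde{\odot}]$ in the canonical generating set are the $\pm\frac{1}{d_i}$ at the hyperbolic-pair positions, and both conclude $\mathsf{rank}[\tilde{\odot}]=2c$. The paper groups the $U$'s and $V$'s into $D$ and $-D$ blocks and counts $2c$ pivot columns, whereas you interleave them into $2\times 2$ blocks of determinant $\frac{1}{d_i^2}$.

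Your extra care about $d_i>1$ is a welcome refinement; it also follows directly from each $d_i$ being the order of a non-identity element of $\mathcal{K}$. Your attention to representatives is likewise useful, but the phrase ``any representative choice consistent with antisymmetry'' overstates it. Classes that are $0 \bmod 1$ must be represented by $0$ itself: lifting, say, a $W_i$--$W_j$ entry to $\pm 1$ would raise the rank.
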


\begin{proof}
A central part of Construction~\ref{constr:resolution} is writing a generating set for $\mathcal{T}$ of the form
\begin{gather}
\{W_1,W_2,\dots,W_\ell,U_1,V_1,U_2,V_2,\dots,U_c,V_c\}
\end{gather}
using Theorem~\ref{thm:decomposition}. Using this generating set, it is most intuitive to view $[\tilde{\odot}]$ as a block matrix:
\begin{gather}
[\tilde{\odot}]=
    \begin{blockarray}{cccc}
    & W_j\mathrm{s} & U_j\mathrm{s} & V_j\mathrm{s} \\
        \begin{block}{c[ccc]}
            W_i\mathrm{s} & O_{\ell\times\ell} & O_{\ell\times c} & O_{\ell\times c} \\
            U_i\mathrm{s} & O_{c\times\ell}    & O_{c\times c}    & D_{c\times c} \\
            V_i\mathrm{s} & O_{c\times\ell}    & -D_{c\times c}   & O_{c\times c} \\
        \end{block}
    \end{blockarray}\label{eq:block_view}
\end{gather}
This form for $[\tilde{\odot}]$ comes about because of the commutation relations of the generators as specified in Theorem~\ref{thm:decomposition}. Everything except the $D$ and $-D$ blocks are null, where $D=\mymatrix{\frac{1}{d_1}&&\\&\ddots&\\&&\frac{1}{d_c}}$ is a $c\times c$ \emph{diagonal} matrix because the generators in the hyperbolic pairs have a nontrivial symplectic product only with each other, and 0 with everything else. Hence, it is clear from \eqref{eq:block_view} and the fact that $D$ is diagonal that exactly $2c$ out of the $\ell+2c$ columns in $[\tilde{\odot}]$ are not all zero. Each of the $2c$ columns, in fact, has exactly one nonzero, and each nonzero is in its separate row. Hence, these $2c$ columns are the linearly independent (pivot) columns of the matrix. Therefore,
\begin{eqnarray}
&&\mathsf{rank}([\tilde{\odot}])=2c\\
&\Rightarrow& c=\frac{1}{2}\mathsf{rank}([\tilde{\odot}]).
\end{eqnarray}
\end{proof}

The prior construction generated shorter mixed-register stabilizer codes, however, it does not fully provide for guarantees on the distance of the constructed codes. The following construction does precisely that, however, it requires seeding with a pair of coprime dimensional stabilizer codes and derives its distance directly from the seeding codes.

\begin{theorem}[Mixed-register scanned construction]\label{scannedconstruction}
Let $S_1$ be a code with parameters $[[n_1,k_1,d_1]]_{Q_1}$ and let $S_2$ be a code with parameters $[[n_2,k_2,d_2]]_{Q_2}$ for a choice of coprime numbers $Q_1,Q_2$. Further, let $\mathcal{I}$ and $\mathcal{J}$ be any (non-disjoint) subsets of $n=n_1+n_2$. Then we may construct a mixed-register code with local-dimension $Q_1$ on registers in $\mathcal{I}-(\mathcal{I}\cap\mathcal{J})$, local-dimension $Q_2$ on registers in $\mathcal{J}-(\mathcal{I}\cap\mathcal{J})$, and local-dimension $Q:=Q_1Q_2$ on registers in $\mathcal{I}\cap\mathcal{J}$. Further, this code will have parameters:
\begin{equation}
[[n=n_1+n_2-|\mathcal{I}\cap\mathcal{J}|,K=Q_1^{k_1}Q_2^{k_2},d=\min(d_1,d_2)]]_{\{Q_1,Q_1Q_2,Q_2\} }
\end{equation}
\end{theorem}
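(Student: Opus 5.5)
The plan is to build the code with the pick-and-mix idea of \cite{gunderman2025beyond}, applied to a register layout instead of a uniform composite device. First fix an injective placement. Send the $n_1$ registers of $S_1$ onto the positions in $\mathcal{I}$ and the $n_2$ registers of $S_2$ onto the positions in $\mathcal{J}$, with $|\mathcal{I}|=n_1$, $|\mathcal{J}|=n_2$, and both inside a set of $n=n_1+n_2-|\mathcal{I}\cap\mathcal{J}|$ registers. Set $\mathsf{Regdim}$ to $Q_1$, $Q_2$, or $Q_1Q_2$ as in the statement. Next, lift each generator $g$ of $S_1$ by multiplying its $\phi$-entries by $Q_2$ on positions in $\mathcal{I}\cap\mathcal{J}$, and leave it unchanged on $\mathcal{I}\setminus\mathcal{J}$. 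Lift each generator of $S_2$ the same way, multiplying by $Q_1$ on the overlap. On a $Q_1Q_2$ register, $X_{Q}^{Q_2}$ and $Z_Q^{Q_2}$ act as a copy of $X_{Q_1},Z_{Q_1}$ on the $\mathbb{Z}_{Q_1}$ factor under the CRT isomorphism $\mathbb{Z}_{Q}\cong\mathbb{Z}_{Q_1}\oplus\mathbb{Z}_{Q_2}$. This observation drives the whole argument.

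\textbf{Commutation.} Within $S_1$, an overlap register contributes $\frac{Q_2^2}{Q_1Q_2}(\cdot)=\frac{Q_2}{Q_1}(\cdot)$ to the generalized symplectic product $\tilde{\odot}$, and a pure $Q_1$ register contributes $\frac{1}{Q_1}(\cdot)$. The total is therefore not the original $S_1$ product, so I will correct for this. I will pre-multiply the overlap entries by $Q_2^{-1}\bmod Q_1$, that is, use the lift $a\mapsto Q_2\,\overline{Q_2^{-1}}\,a$, which is the CRT idempotent. Each register then contributes exactly $\frac{1}{Q_1}(\cdot)\bmod 1$, and the lifted generators commute because the original ones do. Across $S_1$ and $S_2$, the only shared support is the overlap. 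There each term carries the factor $\frac{Q_1Q_2}{Q_1Q_2}$, which is an integer, so all products vanish mod $1$. The lifted set is therefore a mixed-register stabilizer code.

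\textbf{Logical dimension.} I will apply Lemma~\ref{mixedlog}. Alternatively, I will argue directly: the CRT factorizes each $Q_1Q_2$ register as $\mathbb{C}^{Q_1}\otimes\mathbb{C}^{Q_2}$. Under this factorization the lifted $S_1$ acts only on the $Q_1$ factors of its registers, and the lifted $S_2$ acts only on the $Q_2$ factors. The stabilizer group is therefore a tensor product of a copy of $S_1$ and a copy of $S_2$ on disjoint virtual subsystems. The codespace is then $\Pi_{S_1}\otimes\Pi_{S_2}$ up to this local unitary, so $K=Q_1^{k_1}Q_2^{k_2}$.

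\textbf{Distance.} This is the main obstacle. Take a Pauli $E$ of weight $<\min(d_1,d_2)$ that commutes with every lifted generator. Decompose $E$ via the CRT into $E_1\otimes E_2$, with $E_1$ on the $Q_1$ virtual parts and $E_2$ on the $Q_2$ parts. This decomposition is valid because $X_Q^a=X_Q^{a_1\epsilon_1}X_Q^{a_2\epsilon_2}$ with the idempotents $\epsilon_i$. Commuting with the lifted $S_i$ depends only on $E_i$, and the supports satisfy $\mathrm{wt}(E_i)\le \mathrm{wt}(E)$. Each $E_i$ is therefore an undetectable error of $S_i$ with weight below $d_i$, so $E_i\in S_i$ up to phase, and hence $E$ lies in the lifted stabilizer. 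The delicate point is that a physical error on a $Q$ register touches both virtual factors at once. I expect this does not hurt, since each factor still sees weight at most $\mathrm{wt}(E)$, but I need to verify it. The upper bound $d\le\min(d_1,d_2)$ follows by lifting a minimum-weight logical of the worse code. I would also remark that if the images of $\mathcal{I},\mathcal{J}$ were chosen carelessly, the parameters would still hold, since only the placement into coordinates matters.
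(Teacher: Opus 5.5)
Your route is the paper's. You scale $S_1$ on $\mathcal{I}\cap\mathcal{J}$ by a multiple of $Q_2$ and $S_2$ by a multiple of $Q_1$, so cross products are integers. You then count $K$, and for the distance you split an error on a $Q$ register into $Q_1$ and $Q_2$ parts. You are right that scaling both halves by $Q_2$ breaks commutation inside $S_1$ when some registers stay pure $Q_1$ (unless $Q_2\equiv 1\pmod{Q_1}$): an overlap register contributes $\frac{Q_2}{Q_1}(\cdot)$ while a pure one contributes $\frac{1}{Q_1}(\cdot)$. The paper's proof simply asserts commutation is retained, so this diagnosis is a real catch.

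\textbf{The gap.} Your repair fails for the same reason. The product $\tilde{\odot}$ is bilinear. Multiplying both the $X$- and $Z$-entries of an overlap register by $e=Q_2\overline{Q_2}$, with $Q_2\overline{Q_2}\equiv 1\pmod{Q_1}$, multiplies that register's contribution by $e^2$. This gives $\frac{e^2}{Q_1Q_2}(\cdot)\equiv\frac{\overline{Q_2}}{Q_1}(\cdot)\pmod 1$, not $\frac{1}{Q_1}(\cdot)$.

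For example, take $Q_1=3$, $Q_2=2$, so $\overline{Q_2}=2$ and $e=4$. The commuting qutrit operators $X_3X_3$ and $Z_3Z_3^{-1}$, with the second register in the overlap, lift to $X_3X_6^4$ and $Z_3Z_6^{-4}$. Their product is $\frac13-\frac{16}{6}\equiv\frac23\pmod 1$, so they no longer commute.

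In your CRT picture, $X_Q^{Q_2}=X_{Q_1}^{Q_2}\otimes I$, so your opening ``copy'' claim is already twisted. Likewise $Z_Q^{e}=Z_{Q_1}^{\overline{Q_2}}\otimes I$. Your lifted $S_1$ is therefore not a copy of $S_1$ on the virtual $Q_1$ factors. The $K$ and distance steps assume it is, so they do not go through as written.

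\textbf{The repair.} Scale asymmetrically, using the exact CRT images $X_Q^{e}=X_{Q_1}\otimes I$ and $Z_Q^{Q_2}=Z_{Q_1}\otimes I$.
\begin{itemize}
\item For $S_1$ on overlap registers, multiply the $X$-entries by $e$ and the $Z$-entries by $Q_2$ (or the reverse).
\item For $S_2$, use $f=Q_1\overline{Q_1}$ on $X$, with $Q_1\overline{Q_1}\equiv1\pmod{Q_2}$, and $Q_1$ on $Z$.
\end{itemize}
Each overlap register then contributes $\frac{eQ_2}{Q_1Q_2}(\cdot)=\frac{e}{Q_1}(\cdot)\equiv\frac{1}{Q_1}(\cdot)\pmod 1$ within $S_1$, and analogously within $S_2$. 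Cross terms stay integral, because all $S_1$ overlap entries are multiples of $Q_2$ and all $S_2$ ones are multiples of $Q_1$.

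With this lift, the stabilizer group is, up to the CRT relabeling unitary, exactly $S_1\otimes S_2$ on disjoint virtual subsystems. The rest of your argument is then correct and tighter than the paper's sketch. $K=Q_1^{k_1}Q_2^{k_2}$ follows immediately, with no need for Lemma~\ref{mixedlog}.

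The distance worry you flag is harmless. Every Pauli on a $Q$ register factors as $E_1\otimes E_2$ across the two virtual factors; for instance $Z_Q^b=Z_{Q_1}^{\overline{Q_2}b}\otimes Z_{Q_2}^{\overline{Q_1}b}$. The virtual support of each $E_i$ lies inside the physical support of $E$, so $\mathrm{wt}(E_i)\le\mathrm{wt}(E)$. Lifting a minimum-weight logical of either seed code gives the matching upper bound.
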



\begin{proof}
First we verify that the generalized symplectic products ensure that the generators for the code all induce zero net phase, and thus commute. We leave the powers alone on $\mathcal{I}-\mathcal{I}\cap\mathcal{J}$ and $\mathcal{J}-\mathcal{I}\cap\mathcal{J}$, but those on $\mathcal{I}\cap\mathcal{J}$ we multiply the generators from $S_1$ by $Q/Q_2$ so that they induce a phase of is the $Q_1$ root of unity (generally we multiply by, with $\mathcal{Q}=\mathsf{lcm}(Q_1,Q_2,\ldots)$, $\mathcal{Q}/Q_i$ so that the additive order over $\mathcal{Q}$ is $Q_i$ still). Then these will still retain commutation within $Q_1$ generators, likewise for the generators originating from the $Q_2$ code. The key property is that these powers will cancel out over $Q$ with those brought from $S_2$ in the overlap region as the product of the powers is congruent to $0$. This will then generate a commuting set of generators for a mixed-register stabilizer code.

For the size of the logical space, 1) here it really does not make sense to take the log of $K$ to write a $k$ expression since the logical space is not of dimension $Q_1,Q_2,Q_1Q_2$, 2) this follows from the normal subspace argument, given in lemma \ref{mixedlog} and noting that the order of the generators will not be altered.

For the distance, this requires a little more care. Clearly on $\mathcal{I}-\mathcal{I}\cap\mathcal{J}$ those are protected by the starting $S_1$. The slightly tricky thing is noticing that for the $Q$ registers, 1) if the error has an additive term that is a multiple of $\mathcal{Q}/Q_2$, then that is handled by the original $S_1$ code, while if it has an additive term that is a multiple of $\mathcal{Q}/Q_1$ then it is handled by $S_2$, so the distance is preserved. This is why it is essential that $Q_1,Q_2$ are coprime so that by Bezout's identity we can decompose any error into a unique sum of these powers.
\end{proof}

Notice the nuance here that these are not pure $Q_1$ nor $Q_2$ nor $Q_1Q_2$ registers! This was precisely what we saw in the opening example. A slightly surprising thing here is that the partitioning can be \textit{arbitrary}. If they have a trivial overlap, then the codes are just concatenated. But with this construction we can decide how much we want the codes to overlap, but a single register is sufficient so that the codespace projector is not simply the tensor product of disjoint spaces. One could inquire what would happen if the overlapping registers are a multiple of the least common multiple. In that case we could only correct errors that are multiples of the greatest common divisor of this value and the true least common multiple---thus ruining the distance of the code. As a concrete connection with some prior work, this construction perfectly matches the pick-and-mix construction from \cite{gunderman2025beyond} in the case that all registers are the same composite number and that the codes are repeated so that the total length is uniform. As another possible interpretation, we could consider this whole construction as making stabilizer codes over the least common multiple local-dimension, but demanding that errors only be powers which are multiples of the least common multiple divided by the local-dimension---however, that is highly unphysical. 

\begin{corollary}
The prior construction can be extended to the intersection of many coprime stabilizer codes and can involve many codes intersecting at a subset of registers.
\end{corollary}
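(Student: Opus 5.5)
Take $t$ codes $S_1,\dots,S_t$ with parameters $[[n_i,k_i,d_i]]_{Q_i}$, where the $Q_i$ are pairwise coprime, and let $\mathcal{Q}=\prod_i Q_i=\mathsf{lcm}(Q_1,\dots,Q_t)$. Each code comes with a support set $\mathcal{I}_i\subseteq\{1,\dots,n\}$ with $|\mathcal{I}_i|=n_i$, and these sets may overlap arbitrarily. Register $r$ is assigned local-dimension $Q(r)=\prod_{i:\, r\in\mathcal{I}_i}Q_i$. The target parameters are
\[
n=\Bigl|\bigcup_i\mathcal{I}_i\Bigr|,\qquad K=\prod_i Q_i^{k_i},\qquad d=\min_i d_i .
\]
The plan is to repeat the three parts of the proof of Theorem~\ref{scannedconstruction}: commutation, logical dimension, distance. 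The only change is that the single overlap region $\mathcal{I}\cap\mathcal{J}$ is replaced by the per-register set of codes covering that register.

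\textbf{Step 1: embedding.} On register $r$, multiply each $\phi$-entry of a generator of $S_i$ by $Q(r)/Q_i$. This lifts $X_{Q_i}^a Z_{Q_i}^b$ to $X_{Q(r)}^{aQ(r)/Q_i}Z_{Q(r)}^{bQ(r)/Q_i}$. The lifted operator still has order $Q_i$, and its local contribution to $\tilde{\odot}$ is
\[
\frac{(Q(r)/Q_i)^2(ab'-a'b)}{Q(r)}=\frac{Q(r)}{Q_i}\cdot\frac{ab'-a'b}{Q_i}.
\]

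\textbf{Step 2: commutation.} For two generators from the same $S_i$, the factor $Q(r)/Q_i$ is coprime to $Q_i$, so I need the total phase to vanish mod 1 after this per-register rescaling. I expect this is where some care is required. The rescaling factor varies with $r$, so it is not a global unit. The fix is to note that $Q(r)/Q_i\equiv$ (some unit) mod $Q_i$ depending on $r$. I would therefore instead choose the lift $c_{i,r}\,(Q(r)/Q_i)$ with $c_{i,r}\equiv (Q(r)/Q_i)^{-1}\pmod{Q_i}$, which exists by coprimality. With this choice the contribution reduces exactly to $(ab'-a'b)/Q_i$ mod 1, and commutation within $S_i$ is inherited. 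The paper's two-code proof implicitly needs the same CRT normalization.

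For generators from distinct $S_i,S_j$, the local products carry a factor $\mathcal{Q}$-multiple structure: on register $r$ the entries are multiples of $Q(r)/Q_i$ and $Q(r)/Q_j$ respectively. Their product divided by $Q(r)$ is then an integer multiple of $Q(r)/(Q_iQ_j)$, which is an integer since both $Q_i$ and $Q_j$ divide $Q(r)$ whenever both codes act on $r$. Hence this contribution is $0$ mod 1.

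\textbf{Step 3: logical dimension.} By the CRT, $\mathbb{Z}_{Q(r)}\cong\bigoplus_{i\ni r}\mathbb{Z}_{Q_i}$, and each register factorizes as $\mathbb{C}^{Q(r)}\cong\bigotimes_{i\ni r}\mathbb{C}^{Q_i}$. Under this isomorphism the lifted $S_i$ acts only on the $Q_i$ tensor factors. The whole code is then unitarily (locally, per register) equivalent to $\bigotimes_i S_i$. Lemma~\ref{mixedlog} therefore gives $K=\prod_i Q_i^{n_i}/\prod_i Q_i^{n_i-k_i}=\prod_i Q_i^{k_i}$, since generator orders are preserved.

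\textbf{Step 4: distance.} This is the main obstacle, and I expect the same per-register CRT isomorphism to resolve it. An error $E$ on register $r$ decomposes uniquely via Bezout into components $E_i$ on the $\mathbb{Z}_{Q_i}$ factors. $E$ is an undetectable nontrivial logical error iff some $E_i$ is a nontrivial logical of $S_i$. That $E_i$ has weight at least $d_i$ on $\mathcal{I}_i$, and $\mathrm{wt}(E)\ge\mathrm{wt}(E_i)$ because the supports are nested. This gives $d\ge\min_i d_i$. Conversely, the lift of a minimum-weight logical of any $S_i$ achieves $d_i$, so $d=\min_i d_i$.
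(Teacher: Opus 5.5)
\textbf{Step~2 has a genuine gap, and it carries into Step~3.}

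Your overall route is the paper's: rescale exponents into the shared registers, check commutation, count with Lemma~\ref{mixedlog}, and split errors by Bezout. Your diagnosis in Step~2 is also right. Scaling both exponents of $S_i$ on a shared register $r$ by $m_r:=Q(r)/Q_i$ multiplies that register's contribution to $\tilde{\odot}$ by $m_r \bmod Q_i$, while unshared registers keep factor $1$. So commutation inside $S_i$ is not inherited. The uniform $\mathcal{Q}/Q_i$ rescaling in the proof of Theorem~\ref{scannedconstruction}, which the corollary reuses, has this same defect.

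Your fix does not repair it. With the same multiplier $\lambda=c_{i,r}m_r$ on the $X$- and $Z$-exponents,
\[
\frac{\lambda^2(ab'-a'b)}{Q(r)}=\frac{c_{i,r}^2m_r\,(ab'-a'b)}{Q_i}\equiv\frac{c_{i,r}\,(ab'-a'b)}{Q_i}\pmod 1,
\]
since $c_{i,r}m_r\equiv1\pmod{Q_i}$. The factor is now $m_r^{-1}$ rather than $m_r$, but it still depends on $r$. No symmetric multiplier $\mu m_r$ can work in general, because it would need $\mu^2\equiv m_r^{-1}\pmod{Q_i}$, which is already impossible for $Q_i=3$, $m_r=2$.

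Concretely, take the $[[4,2,2]]_3$ code $\langle X_3^{\otimes 4},\,Z_3Z_3^{-1}Z_3Z_3^{-1}\rangle$ and merge its first register with a qubit register into a quhex. Your lift has $\lambda=4$ and produces $X_6^4X_3X_3X_3$ and $Z_6^4Z_3^{-1}Z_3Z_3^{-1}$. Their generalized symplectic product is $\tfrac{16}{6}-\tfrac13+\tfrac13-\tfrac13=\tfrac73\equiv\tfrac13\neq0$.

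The same mistake breaks Step~3. Under $|j\rangle\mapsto\bigotimes_{i\ni r}|j\bmod Q_i\rangle$, your lifted operator acts on the $Q_i$ factor as $X_{Q_i}^{a}Z_{Q_i}^{c_{i,r}b}$. This rescales the local symplectic form by $c_{i,r}$, so the code is not locally equivalent to $\bigotimes_iS_i$.

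\textbf{The repair} is to let the CRT factorization of Step~3 \emph{define} the lift, sending $P$ on the $\mathbb{C}^{Q_i}$ factor of register $r$ to $P\otimes I$. Let $e_{i,r}=c_{i,r}m_r$ be the CRT idempotent. Then $X_{Q(r)}^{e_{i,r}}$ acts as $X_{Q_i}\otimes I$, while $Z_{Q(r)}^{m_r}$ acts as $Z_{Q_i}\otimes I$. So you should multiply $X$-exponents by $e_{i,r}$ but $Z$-exponents by $m_r$. More generally, any multiples $\lambda_X,\lambda_Z$ of $m_r$ with $\lambda_X\lambda_Z\equiv m_r\pmod{Q(r)}$ work. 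With this lift:
\begin{itemize}
\item The local contribution becomes $e_{i,r}(ab'-a'b)/Q_i\equiv(ab'-a'b)/Q_i$, so commutation inside each $S_i$ is inherited.
\item Your cross-code computation survives, because both multipliers are still multiples of $m_r$.
\item Steps~3 and~4 become literally true. The code is $\bigotimes_iS_i$ up to per-register relabeling, and generator orders are preserved. Each Pauli on register $r$ factors as $\bigotimes_{i\ni r}X_{Q_i}^{x_i}Z_{Q_i}^{z_i}$, which gives $\mathrm{wt}(E)\ge\mathrm{wt}(E_i)$ and hence $d=\min_i d_i$.
\end{itemize}
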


Another important aspect to the prior theorem is that this provides for good qLDPC mixed-register codes. If one wished to extend this to arbitrary choices of local-dimensions, while still coprime, the local-dimension-invariant framework would permit that \cite{gunderman2020local,gunderman2024stabilizer,gunderman2025beyond}. Another notable remark is that this property of requiring at least a single composite register that is the least common multiple is a purely quantum phenomena, in that classical codeword bases can be specified freely within their respective lattices, but due to the commutation constraint we require at least one least common multiple register.

\begin{remark}[Topological phase connection]
As a simple example, begin by recalling that the codewords of the toric code correspond to different topologically protected phases. Let us first construct a mixed-register code consisting of a qubit and a qutrit surface code which are disjoint \cite{gunderman2024stabilizer}. Then upon replacing a single edge vertex with a $Q=6$ level register, the two surfaces are joined and entangled. In particular the total topologic phases no longer correspond to a qubit and a qutrit phase, but instead now a $6$-tuple of degenerate quhex phases, with only a qubit like phase appearing on the pure qubit registers and only a qutrit like phase appearing on the pure qutrit registers, while the joint quhex registers become a value that represents their join topologic phases. Notably this is achieved by a single quhex, while if the single point-like connection between the surfaces is expanded into a full interface then a clear boundary between the phases is formed. While in this example we selected a qubit and qutrit toric code, we can select any coprime dimensions for the toric codes then join them. Or more broadly select any topologic code and perform the same merging procedure as outlined in Theorem \ref{scannedconstruction}, where the overlap between the topologic codes can be altered simply by changing which registers take values in the $\mathsf{lcm}$ of the intersecting surfaces.
\end{remark}

\section{Conclusion}

In this work we have considered mixed-register stabilizer codes, taking a more coding-theoretic perspective. We began by showing general results on mixed-register Pauli operators, followed by some no-go results on mixed-register stabilizer codes. This then led us to a first construction for mixed-register stabilizer codes which made use of a canonical decomposition for mixed-register Pauli operators. Then we provided a second construction allowing for the use of arbitrary stabilizer codes to generate such codes. This generates entanglement between varying local-dimensions, resulting in atypical topological phase structure. Important future directions directly building off of the results herein include generating other constructions for mixed-register codes, as well as further studies of possible transversal logical operations and logical operations more generally.

We also wish to point out that importantly the bulk of our results pertain to mixed-register stabilizer codes. In particular our no-go results may hold in scenarios beyond stabilizer codes. Some instances of possible workarounds include: subsystems \cite{li2025qutrit,li2026transversal}, finite embeddings of infinite registers \cite{chakraborty2025hybrid}, approximate codes, beyond Pauli stabilized codes \cite{ball2026error}, floquet codes, and non-additive codes \cite{ball2026error}, among other options. Further research into possible constraints for these systems could be fruitful, as we centered stabilizer codes here. 

\section*{Acknowledgments}

We thank Md. Shahinul Islam for helpful early discussions and pointing out the work \cite{ball2026error}.

\bibliographystyle{unsrt}
\phantomsection  
\renewcommand*{\bibname}{References}

\bibliography{main}

@article{ketkar2006nonbinary,
  title={Nonbinary stabilizer codes over finite fields},
  author={Ketkar, Avanti and Klappenecker, Andreas and Kumar, Santosh and Sarvepalli, Pradeep Kiran},
  journal={IEEE transactions on information theory},
  volume={52},
  number={11},
  pages={4892--4914},
  year={2006},
DOI={10.1109/TIT.2006.883612},
  publisher={IEEE}
}

@article{gunderman2020local,
  title={Local-dimension-invariant qudit stabilizer codes},
  author={Gunderman, Lane G},
  journal={Physical Review A},
  volume={101},
  number={5},
  pages={052343},
  year={2020},
  DOI = {10.1103/PhysRevA.101.052343},
  publisher={APS}
}

@book{watrous2018theory,
  title={The theory of quantum information},
  author={Watrous, John},
  year={2018},
DOI={10.1017/9781316848142},
  publisher={Cambridge university press}
}

@article{gottesman2001encoding,
  title={Encoding a qubit in an oscillator},
  author={Gottesman, Daniel and Kitaev, Alexei and Preskill, John},
  journal={Physical Review A},
  volume={64},
  number={1},
  pages={012310},
  year={2001},
DOI={10.1103/PhysRevA.64.012310},
  publisher={APS}
}

@article{lloyd1998analog,
  title={Analog quantum error correction},
  author={Lloyd, Seth and Slotine, Jean-Jacques E},
  journal={Physical Review Letters},
  volume={80},
  number={18},
  pages={4088},
  year={1998},
DOI={10.1103/PhysRevLett.80.4088},
  publisher={APS}
}

@incollection{braunstein1998error,
  title={Error correction for continuous quantum variables},
  author={Braunstein, Samuel L},
  booktitle={Quantum Information with Continuous Variables},
  pages={19--29},
  year={1998},
DOI={10.1103/PhysRevLett.80.4084},
  publisher={Springer}
}

@article{noh2020encoding,
  title={Encoding an oscillator into many oscillators},
  author={Noh, Kyungjoo and Girvin, SM and Jiang, Liang},
  journal={Physical Review Letters},
  volume={125},
  number={8},
  pages={080503},
  year={2020},
DOI={10.1103/PhysRevLett.125.080503},
  publisher={APS}
}

@article{barnes2004stabilizer,
  title={Stabilizer codes for continuous-variable quantum error correction},
  author={Barnes, Richard L},
  journal={arXiv preprint quant-ph/0405064},
DOI={10.48550/arXiv.quant-ph/0405064},
  year={2004}
}

@article{albert2022bosonic,
  title={Bosonic coding: introduction and use cases},
  author={Albert, Victor V},
  journal={arXiv preprint arXiv:2211.05714},
DOI={10.48550/arXiv.2211.05714},
  year={2022}
}

@book{gottesman1997stabilizer,
  title={Stabilizer codes and quantum error correction},
  author={Gottesman, Daniel},
  year={1997},
DOI={10.48550/arXiv.quant-ph/9705052},
  publisher={California Institute of Technology}
}

@article{gottesman1996class,
  title={Class of quantum error-correcting codes saturating the quantum Hamming bound},
  author={Gottesman, Daniel},
  journal={Physical Review A},
  volume={54},
  number={3},
  pages={1862},
  year={1996},
DOI={10.1103/PhysRevA.54.1862},
  publisher={APS}
}

@book{lidar2013quantum,
  title={Quantum error correction},
  author={Lidar, Daniel A and Brun, Todd A},
  year={2013},
DOI={10.1017/cbo9781139034807},
  publisher={Cambridge university press}
}

@article{conrad2022gottesman,
  title={Gottesman-Kitaev-Preskill codes: A lattice perspective},
  author={Conrad, Jonathan and Eisert, Jens and Arzani, Francesco},
  journal={Quantum},
  volume={6},
  pages={648},
  year={2022},
DOI={10.22331/q-2022-02-10-648},
  publisher={Verein zur F{\"o}rderung des Open Access Publizierens in den Quantenwissenschaften}
}

@article{sarkar2023qudit,
  title={The qudit Pauli group: non-commuting pairs, non-commuting sets, and structure theorems},
  author={Sarkar, Rahul and Yoder, Theodore J},
  journal={arXiv preprint arXiv:2302.07966},
DOI={10.48550/arXiv.2302.07966},
  year={2023}
}

@article{vuillot2024homological,
  title={Homological quantum rotor codes: Logical qubits from torsion},
  author={Vuillot, Christophe and Ciani, Alessandro and Terhal, Barbara M},
  journal={Communications in Mathematical Physics},
  volume={405},
  number={2},
  pages={53},
  year={2024},
  publisher={Springer}
}

@article{conrad2023good,
  title={Good Gottesman-Kitaev-Preskill codes from the NTRU cryptosystem},
  author={Conrad, Jonathan and Eisert, Jens and Seifert, Jean-Pierre},
  journal={arXiv preprint arXiv:2303.02432},
DOI={10.48550/arXiv.2303.02432},
  year={2023}
}

@article{gunderman2024stabilizer,
  title={Stabilizer codes with exotic local-dimensions},
  author={Gunderman, Lane G},
  journal={Quantum},
  volume={8},
  pages={1249},
  year={2024},
  publisher={Verein zur F{\"o}rderung des Open Access Publizierens in den Quantenwissenschaften}
}

@article{michael2016new,
  title={New class of quantum error-correcting codes for a bosonic mode},
  author={Michael, Marios H and Silveri, Matti and Brierley, RT and Albert, Victor V and Salmilehto, Juha and Jiang, Liang and Girvin, Steven M},
  journal={Physical Review X},
  volume={6},
  number={3},
  pages={031006},
  year={2016},
  publisher={APS}
}

@article{novak2024homological,
  title={Homological Quantum Error Correction with Torsion},
  author={Nov{\'a}k, Samo},
  journal={arXiv preprint arXiv:2405.03559},
  year={2024}
}

@article{chakraborty2025hybrid,
  title={Hybrid oscillator-qudit quantum processors: stabilizer states and symplectic operations},
  author={Chakraborty, Sayan and Albert, Victor V},
  journal={arXiv preprint arXiv:2508.04819},
  year={2025}
}

@article{gunderman2025beyond,
  title={Beyond integral-domain stabilizer codes},
  author={Gunderman, Lane G},
  journal={Physical Review A},
  volume={112},
  number={6},
  pages={062430},
  year={2025},
  publisher={APS}
}

@article{chuang1997bosonic,
  title={Bosonic quantum codes for amplitude damping},
  author={Chuang, Isaac L and Leung, Debbie W and Yamamoto, Yoshihisa},
  journal={Physical Review A},
  volume={56},
  number={2},
  pages={1114},
  year={1997},
  publisher={APS}
}

@article{albert2018performance,
  title={Performance and structure of single-mode bosonic codes},
  author={Albert, Victor V and Noh, Kyungjoo and Duivenvoorden, Kasper and Young, Dylan J and Brierley, RT and Reinhold, Philip and Vuillot, Christophe and Li, Linshu and Shen, Chao and Girvin, Steven M and others},
  journal={Physical Review A},
  volume={97},
  number={3},
  pages={032346},
  year={2018},
  publisher={APS}
}

@inproceedings{li2025qutrit,
  title = {A Qutrit $[\![6,2,2]\!]_3$ Quantum Error-Correcting Code with a Transversal Binary AND Gate},
  author = {Li, Christine and Yeh, Lia},
  booktitle = {2025 IEEE International Conference on Quantum Computing and Engineering (QCE)},
  volume = {2},
  pages = {550--551},
  year = {2025},
  organization = {IEEE}
}

@article{ball2026error,
  title={Error-correcting codes and absolutely maximally entangled states for mixed-dimensional Hilbert spaces},
  author={Ball, Simeon and Zhang, Raven},
  journal={Physical Review A},
  volume={113},
  number={1},
  pages={012432},
  year={2026},
  publisher={APS}
}

@article{li2026transversal,
  title={Transversal AND in Quantum Codes},
  author={Li, Christine and Yeh, Lia},
  journal={arXiv preprint arXiv:2603.04548},
  year={2026}
}

@article{wilde2008optimal,
  title={Optimal entanglement formulas for entanglement-assisted quantum coding},
  author={Wilde, Mark M and Brun, Todd A},
  journal={Physical Review A—Atomic, Molecular, and Optical Physics},
  volume={77},
  number={6},
  pages={064302},
  year={2008},
  publisher={APS}
}

@article{gunderman2023transforming,
  title={Transforming collections of Pauli operators into equivalent collections of Pauli operators over minimal registers},
  author={Gunderman, Lane G},
  journal={Physical Review A},
  volume={107},
  number={6},
  pages={062416},
  year={2023},
  publisher={APS}
}

@article{gunderman2024minimal,
  title={Minimal qubit representations of Hamiltonians via conserved charges},
  author={Gunderman, Lane G and Jena, Andrew and Dellantonio, Luca},
  journal={Physical Review A},
  volume={109},
  number={2},
  pages={022618},
  year={2024},
  publisher={APS}
}

@article{huber2025quasi,
  title={Quasi-Clifford to qubit mappings},
  author={Huber, Felix},
  journal={arXiv preprint arXiv:2508.01470},
  year={2025}
}

@article{brenner2025factoring,
  title={Factoring an integer with three oscillators and a qubit},
  author={Brenner, Lukas and Caha, Libor and Coiteux-Roy, Xavier and Koenig, Robert},
  journal={Nature Communications},
  year={2025},
  publisher={Nature Publishing Group UK London}
}

@article{gross2006hudson,
  title={Hudson’s theorem for finite-dimensional quantum systems},
  author={Gross, David},
  journal={Journal of mathematical physics},
  volume={47},
  number={12},
  year={2006},
  publisher={AIP Publishing}
}

@article{howard2014contextuality,
  title={Contextuality supplies the ‘magic’for quantum computation},
  author={Howard, Mark and Wallman, Joel and Veitch, Victor and Emerson, Joseph},
  journal={Nature},
  volume={510},
  number={7505},
  pages={351--355},
  year={2014},
  publisher={Nature Publishing Group UK London}
}

@article{eastin2009restrictions,
  title={Restrictions on transversal encoded quantum gate sets},
  author={Eastin, Bryan and Knill, Emanuel},
  journal={Physical review letters},
  volume={102},
  number={11},
  pages={110502},
  year={2009},
  publisher={APS}
}

@article{zeng2011transversality,
  title={Transversality versus universality for additive quantum codes},
  author={Zeng, Bei and Cross, Andrew and Chuang, Isaac L},
  journal={IEEE Transactions on Information Theory},
  volume={57},
  number={9},
  pages={6272--6284},
  year={2011},
  publisher={IEEE}
}

\clearpage

\if{false}
\begin{lemma}
Let $P$ be a set of compositionally independent Pauli operators for a device $D$ with $\mathsf{Regdim}_{1\Rightarrow u}(D)=[q_1,q_2,\ldots, q_u]$, then these can be extended into generators for a mixed-register stabilizer code with $n=u+\frac{1}{2}\mathsf{rank}([\tilde{\odot}_u])$ total registers, where $[\tilde{\odot}_u]$ is the matrix of generalized symplectic inner product with $[\tilde{\odot}_u]_{ij}=\phi(p_i)\tilde{\odot}_u\phi(p_j)$ for $p_i,p_j\in P$.  
\end{lemma}

\fi


\if{false}
\begin{proof}
First, note that under composition of Pauli operator $i$ onto operator $j$, the action on the matrix $[\tilde{\odot}_u]$ is to simultaneously add row and column $i$ to row and column $j$. As $[\tilde{\odot}_u]$ must be hollow, this simultaneous action is equivalent to a row addition followed by a column addition (or conjugation by a row addition matrix). The order of each element in $P$ must be at most $\mathsf{lcm}(q_1,q_2,\ldots,q_u)$. Through repeated compositions, we obtain all the primitives required for Gaussian elimination over the module. Notably, we may reduce $[\tilde{\odot}_u]$ into a pseudodiagonal form: $\oplus_{i=1}^{\mathsf{rank}([\tilde{\odot}_u])/2} \begin{bmatrix}0&-1\\ 1&0\end{bmatrix}\oplus_{i=1}^{|P|-\mathsf{rank}([\tilde{\odot}_u])/2}[0]$. Let $L$ be the compositions performed, then we may complete the commutations by appending $\{X_{\mathsf{lcm}},Z_{\mathsf{lcm}}\}$ to the end of each operator in the first $\mathsf{rank}([\tilde{\odot}_u])/2$ positions, then conjugate these new registers by $L^{-1}$ to determine the appropriate appended operators so that the full set is a valid mixed-register stabilizer code. [This argument doesn't quite work, instead use Sarkar+Yoder's result and then argue that the order must be factors of the LCM]

\end{proof}
\fi


\if{false}
\begin{theorem}
Let $S$ be a stabilizer code with parameters $[[n,k,d]]_q$. Then there exists a mixed-register device $D$, with at least two different register dimensions, such that there is a selection of $\{q_i\}$ for $\mathsf{Regdim}(D)$ such that $S$ can be transformed into an $[[n,k',d']]_{\vec{q}}$ stabilizer code, with $d'\geq d$ and $k'\geq k$. [$k'\geq k$ since the order of the operators could diminish if the numerator happens to cancel out nicely]
\end{theorem}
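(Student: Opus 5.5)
The plan is to realize the transformation by enlarging the local-dimension of a nonempty proper subset of registers, and to protect the newly introduced levels with single-register stabilizers. The choice is made so that commutation, logical dimension and distance can each be checked directly against the original code $S$.

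Fix a prime $p$ with $p\nmid q$ and $p\equiv 1 \pmod q$, which exists by Dirichlet; it is also enough to take any integer $p$ coprime to $q$ with $p\equiv 1\pmod q$. Choose a nonempty proper subset $M\subset\{1,\dots,n\}$ of registers and set $\mathsf{Regdim}_i(D)=qp$ for $i\in M$ and $\mathsf{Regdim}_i(D)=q$ otherwise, so that $D$ has two distinct register dimensions. Transform each generator $g$ of $S$ by multiplying its $X$- and $Z$-powers on registers in $M$ by $p$, as in the proof of Theorem \ref{scannedconstruction}. Then adjoin, for every $i\in M$, the two single-register generators $X_{qp}^{q}$ and $Z_{qp}^{q}$ on register $i$.

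The verification proceeds in three steps.

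\emph{Commutation.} On a register in $M$, two lifted generators contribute $\frac{1}{qp}p^2(a_xb_z-a_zb_x)=\frac{p}{q}(a_xb_z-a_zb_x)$ to the generalized symplectic product. Since $p\equiv1\pmod q$, this equals the original $\frac1q(a_xb_z-a_zb_x)$ modulo $1$. Hence $\tilde{\odot}$ between lifted generators agrees mod $1$ with the original symplectic product, which vanishes. The adjoined generators have product $\frac{1}{qp}\,q\cdot p\,(\cdot)\in\mathbb{Z}$ with the lifted generators and with each other, so every pair commutes.

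\emph{Logical dimension.} The lifted generators keep order $q$, and the adjoined ones have order $p$. Lemma \ref{mixedlog} then gives
\begin{equation}
K=\frac{q^{n}p^{|M|}}{q^{n-k}p^{2|M|}}\cdot p^{|M|}=q^k ,
\end{equation}
up to the bookkeeping of how the $p$-parts split. Intuitively, on each register in $M$ the $\mathbb{Z}_p$ tensor factor is fixed to a stabilizer state by $X^q,Z^q$, so that $k'=k$. If $M$ is chosen so that some of the adjoined generators can be dropped while keeping the distance, one gets $k'\ge k$.

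\emph{Distance.} By the Chinese remainder theorem, any Pauli power $e$ on a register in $M$ decomposes uniquely as $e=pe_1+qe_2$. The $qe_2$ part lies in the group generated by the adjoined stabilizers, so it is harmless. The $pe_1$ part acts, via $p$ being invertible mod $q$, exactly as the $q$-ary error $e_1$ acts on the original code. An error that commutes with the lifted stabilizers but is not in the stabilizer group therefore projects to an original logical error of no greater weight, which gives $d'\ge d$.

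The main obstacle is the distance step. I must check that the weight of the projected error does not exceed the weight of the original, and that no new undetectable error arises from the interaction between the $p$-parts and the adjoined generators. Concretely, this requires showing that the normalizer of the new stabilizer group is exactly the image of $\mathcal{N}(S)$ together with the adjoined $\mathbb{Z}_p$ stabilizers. I would first attempt this through the Bezout decomposition used in Theorem \ref{scannedconstruction}.
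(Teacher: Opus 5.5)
\textbf{The construction fails at the adjoined generators.} On each register $i\in M$ you add both $X_{qp}^{q}$ and $Z_{qp}^{q}$, and these do not commute. Your check asserts a product of the form $\frac{1}{qp}\,q\cdot p\,(\cdot)$ for adjoined pairs as well. That form is right against lifted generators, but for $X_{qp}^{q}$ against $Z_{qp}^{q}$ the generalized symplectic product is $\frac{1}{qp}(q\cdot q-0\cdot 0)=\frac{q}{p}$. This is not an integer, since $\gcd(p,q)=1$ and $p\ge 2$; for instance, $Z_6^2X_6^2=\omega_6^{4}X_6^2Z_6^2$ when $q=2$, $p=3$.

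Under the splitting $|j\rangle\mapsto|j\bmod q\rangle|j\bmod p\rangle$ of $\mathbb{C}^{qp}\cong\mathbb{C}^{q}\otimes\mathbb{C}^{p}$, one has $X_{qp}^{q}=I\otimes X_p^{q}$ and $Z_{qp}^{q}=I\otimes Z_p$. A vector fixed by two operators $A,B$ with $BA=\omega AB$ and $\omega\neq 1$ satisfies $v=\omega v$, so $v=0$. The $p$-level factor therefore cannot be ``fixed to a stabilizer state'' by both operators, and the group you define has no codespace.

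The dimension count shows the same problem. Lemma~\ref{mixedlog} honestly gives $q^{n}p^{|M|}/(q^{n-k}p^{2|M|})=q^{k}/p^{|M|}$, which is not even an integer. The extra factor $p^{|M|}$ you insert ``up to bookkeeping'' has no justification. Your distance step relies on the $X$-type component $qe_2$ being a stabilizer, which is exactly what cannot hold.

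\textbf{The repair.} Adjoin only $Z_{qp}^{q}$ (order $p$) on each $i\in M$. It commutes with everything, and Lemma~\ref{mixedlog} then gives exactly $K=q^{n}p^{|M|}/(q^{n-k}p^{|M|})=q^{k}$. The distance argument changes as follows.
\begin{enumerate}
\item Write the $X$-power on $i\in M$ as $x=px_1+qx_2$. Commutation with $Z_{qp}^{q}$ is $x/p\in\mathbb{Z}$, which forces $x_2\equiv 0\pmod p$. So $X$-type $p$-parts are detected rather than absorbed.
\item The $Z$-type $p$-part $qz_2$ is a power of $Z_{qp}^{q}$, hence a stabilizer.
\item Since $p\equiv 1\pmod q$, the map $X_q^{a}Z_q^{b}\mapsto X_{qp}^{pa}Z_{qp}^{pb}$ preserves commutation phases exactly. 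The $q$-part of an undetectable nontrivial error therefore lies in $\mathcal{N}(S)\setminus S$.
\item A register operator on $\mathbb{Z}_{qp}$ is trivial iff both CRT parts are, so projecting to the $q$-part cannot increase weight. This gives $d'\ge d$, in fact $d'=d$.
\end{enumerate}
So the step you flagged as the main obstacle is routine once the stabilizer set is corrected. Also, $p=q+1$ already satisfies your conditions, so Dirichlet is unnecessary.

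\textbf{Comparison with the paper.} The repaired route genuinely differs from the paper's argument, which is only sketched and explicitly leaves the existence of a suitable partition open. That sketch keeps the same generators and reassigns register dimensions. It obtains $k'\ge k$ from possible drops in generator order, and $d'\ge d$ from the local-dimension-invariant machinery of the ``Beyond integral-domain'' work with very large dimensions. Your version is elementary and complete. However, the resulting code is locally unitarily equivalent to $S$ tensored with a product state on the $p$-factors. It meets the statement only in that trivial sense, with $k'=k$.
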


\begin{proof}
For $k'$, the number of independent generators is preserved, however, the order of the generators may not be. It's possible that some of the generators have order only a factor of $\mathsf{lcm}(q_i)$, thus increasing the logical space. The missing piece is showing that there exists a partition always (although a really uneven one should work).

For $d'$ this will use the argument from "Beyond integral-domain stabilizer codes" by Gunderman---which means that the local-dimensions of the registers may need to be comically large but this is just an existence.
\end{proof}

\begin{remark}
Note that the above result immediately provides good qLDPC mixed-register codes, albeit with caveats on the selection of register dimensions. A worthwhile endeavor would be to show the reverse case: given a selection of register dimensions there exists a stabilizer code from a uniform finite dimension such that the parameters are at least preserved.
\end{remark}
\fi



\section*{Appendix}

Here we provide a proof of our structural decomposition Theorem.

\begin{proof}[Proof of Theorem \ref{thm:decomposition}]


Define the (normal) subgroup
\begin{gather}
\mathcal{R}(\mathcal{T})=\{W\in\mathcal{T}\mid\phi(W)\tilde{\odot}\phi(P)=0\ \forall P\in\mathcal{T}\}.\label{eq:radical_def}
\end{gather}
Choose a minimal generating set $\{W_1,\dots,W_\ell\}$ for $\mathcal{R}(\mathcal{T})$. Hence,
\begin{gather}
\mathcal{R}(\mathcal{T})=\Left\langle W_1,W_2,\dots,W_\ell\Right\rangle.\label{eq:radical_gen}
\end{gather}
Let $\mathcal{K}=\mathcal{T}/\mathcal{R}(\mathcal{T})$. If $\mathcal{K}$ is the trivial group, then $\mathcal{R}(\mathcal{T})=\mathcal{T}\Rightarrow\mathcal{T}$ is a mixed-register stabilizer code and we are done. Assume $\mathcal{K}$ is nontrivial. Define the alternating bicharacter $\beta:\mathcal{K}\times\mathcal{K}\to\frac{1}{Q}\mathbb{Z}/\mathbb{Z}$, where $\frac{1}{Q}\mathbb{Z}/\mathbb{Z}<\mathbb{Q}/\mathbb{Z}$, as
\begin{gather}
\beta(P\mathcal{R}(\mathcal{T}),Q\mathcal{R}(\mathcal{T}))=\phi(P)\tilde{\odot}\phi(Q),\text{ where }P,Q\in\mathcal{T}.
\end{gather}
It is necessary to see that this is indeed well-defined since picking any other representative for the coset $P\mathcal{R}(\mathcal{T})$, say $P^\prime=PW$ for some $W\in \mathcal{R}(\mathcal{T})$, does not change $\beta$:
\begin{eqnarray}
\beta(P^\prime \mathcal{R}(\mathcal{T}), Q\mathcal{R}(\mathcal{T})) &=& \phi(P^\prime)\tilde{\odot}\phi(Q)\\
&=& \phi(PW)\tilde{\odot}\phi(Q)\\
&=& (\phi(P)+\phi(W))\tilde{\odot}\phi(Q)\\
&=& \phi(P)\tilde{\odot}\phi(Q) + \underbrace{\phi(W)\tilde{\odot}\phi(Q)}_{0\text{ by }\eqref{eq:radical_def}}\\
&=& \beta(P\mathcal{R}(\mathcal{T}),Q\mathcal{R}(\mathcal{T})),
\end{eqnarray}
and similarly for replacing $Q$. By way of contradiction, assume that there exists a non-identity element $P\mathcal{R}(\mathcal{T})\in\mathcal{K}$ that pairs trivially under $\beta$ with all other elements; i.e., $\beta(P\mathcal{R}(\mathcal{T}),Q\mathcal{R}(\mathcal{T}))=0$ for all $Q\mathcal{R}(\mathcal{T})\in\mathcal{K}$. Then,
\begin{eqnarray}
&&\phi(P)\tilde{\odot}\phi(Q)=0\text{ for all }Q\in\mathcal{T}\\
&\Rightarrow& P\in \mathcal{R}(\mathcal{T})\text{ by the definition of }\mathcal{R}(\mathcal{T})\\
&\Rightarrow& P\mathcal{R}(\mathcal{T})\text{ is the identity coset in }\mathcal{K}\\
&\Rightarrow& \text{Contradiction (since we assumed }P\mathcal{R}(\mathcal{T})\text{ is not the identity)}.
\end{eqnarray}
Hence, supposing that $I$ is the identity Pauli,
\begin{gather}
\forall P\mathcal{R}(\mathcal{T})\in\mathcal{K}\text{ with }P\mathcal{R}(\mathcal{T})\ne I\mathcal{R}(\mathcal{T}),\ \exists Q\mathcal{R}(\mathcal{T})\in\mathcal{K}\text{ such that }\beta(P\mathcal{R}(\mathcal{T}),Q\mathcal{R}(\mathcal{T}))\ne0.\label{eq:nontrivial_pairing} 
\end{gather}
Choose $U\in\mathcal{K}$ of maximal order, say $\lvert U\rvert=d$. Then, $d\mid Q$ because $U^Q=I\mathcal{R}(\mathcal{T})$. Define the mapping
\begin{gather}
\beta_{U}:\mathcal{K}\to\frac{1}{Q}\mathbb{Z}/\mathbb{Z}\text{ given by }A\mapsto \beta(U,A).
\end{gather}
Then, the map $\beta_{U}$ is a group homomorphism because the pairing $\beta$ is additive in the second argument. For every $A\in\mathcal{K}$,
\begin{gather}
d\beta_{U}(A)=d\beta(U,A)=\beta(U^d,A)=\beta(I\mathcal{R}(\mathcal{T}),A)=0.
\end{gather}
Hence, every element of the image $\beta_{U}(\mathcal{K})$ has an order that divides $d$. Moreover, $\beta_{U}(\mathcal{K})$ is cyclic because it is a subgroup of the cyclic group $\frac{1}{Q}\mathbb{Z}/\mathbb{Z}$. Thus, $\lvert\beta_{U}(\mathcal{K})\rvert\text{ divides }d$. Suppose, by way of contradiction, that $\lvert\beta_{U}(\mathcal{K})\rvert=d^\prime<d$. Then,
\begin{eqnarray}
&&d^\prime\beta_{U}(A)=0\text{ for all }A\in\mathcal{K}\\
&\Rightarrow& d^\prime\beta(U,A)=0\text{ for all }A\in\mathcal{K}\\
&\Rightarrow& \beta\Left(U^{d^\prime},A\Right)=0\text{ for all }A\in\mathcal{K}.\\
&\Rightarrow& U^{d^\prime}=I\mathcal{R}(\mathcal{T})\text{ by }\eqref{eq:nontrivial_pairing}\\
&\Rightarrow& \text{Contradiction (since }\lvert U\rvert=d>d^\prime\text{)}.
\end{eqnarray}


\noindent Thus, $\lvert\beta_{U}(\mathcal{K})\rvert=d$. By the fundamental theorem of cyclic groups, $\beta_{U}(\mathcal{K})$ is the unique order-$d$ (cyclic) subgroup of $\frac{1}{Q}\mathbb{Z}/\mathbb{Z}$, and since $\frac{1}{d}\in\frac{1}{Q}\mathbb{Z}/\mathbb{Z}$ is an element of order $d$, it generates $\beta_{U}(\mathcal{K})$:
\begin{gather}
\beta_{U}(\mathcal{K})=\Left\langle\frac{1}{d}\Right\rangle=\Left\{0,\frac{1}{d},\dots,\frac{d-1}{d}\Right\}<\frac{1}{Q}\mathbb{Z}/\mathbb{Z}.
\end{gather}
Since $\frac{1}{d}\in\beta_{U}(\mathcal{K})$, there exists an element $V\in\mathcal{K}$ such that $\beta_{U}(V)=\frac{1}{d}$, that is $\beta(U,V)=\frac{1}{d}$. Then, by properties of group homomorphisms, $\lvert\beta_U(V)\rvert$ divides $\lvert V\rvert$; i.e., $d\mid\lvert V\rvert$. This implies $d\leq\lvert V\rvert$. But $U$ was chosen of maximal order, so no element of $\mathcal{K}$ has order exceeding $d$. Thus, $\lvert V\rvert=d$. Let $\mathcal{G}$ be the subgroup of $\mathcal{K}$ generated by $U$ and $V$; i.e., $\mathcal{G}=\langle U,V\rangle\leq\mathcal{K}$. The next claim is that $\mathcal{G}$ is precisely the internal direct product $\langle U\rangle\times\langle V\rangle$. To show this, it suffices to show that the intersection $\langle U\rangle\cap\langle V\rangle$ is trivial. By way of contradiction, assume that there exists a non-identity element $A\in\langle U\rangle\cap\langle V\rangle$. Then, there exist $a,b\in\mathbb{Z}$ such that $U^a=V^b=A$. Hence,
\begin{eqnarray}
&&\beta_U(V^b)=\beta_U(U^a)\\
&\Rightarrow& b\beta_U(V)=0\\
&\Rightarrow& \frac{b}{d}=0\mod 1\\
&\Rightarrow& d\mid b\\
&\Rightarrow& A=V^b=(V^d)^{\frac{b}{d}}=I\mathcal{R}(\mathcal{T})\\
&\Rightarrow& \text{Contradiction (since we assumed }A\text{ to be non-identity)}.
\end{eqnarray}
Hence, the intersection $\langle U\rangle\cap\langle V\rangle$ is in fact trivial, which implies that $\mathcal{G}=\langle U\rangle\times\langle V\rangle$. But both $\langle U\rangle$ and $\langle V\rangle$ are cyclic groups of order $d$. Thus,
\begin{gather}
\mathcal{G}\cong\mathbb{Z}_d\oplus\mathbb{Z}_d.
\end{gather}
Define
\begin{gather}
\mathcal{G}^\perp=\{A\in\mathcal{K}\mid\beta(A,B)=0\ \forall B\in\mathcal{G}\}.
\end{gather}
Since $\mathcal{G}$ is generated by $U$ and $V$, this is equivalently
\begin{gather}
\mathcal{G}^\perp=\{A\in\mathcal{K}\mid\beta(A,U)=\beta(A,V)=0\}.
\end{gather}
Note that $\mathcal{G}\cap\mathcal{G}^\perp=\{I\mathcal{R}(\mathcal{T})\}$. Suppose, by way of contradiction, that this is not true and that there exists some non-identity element $A\in\mathcal{G}\cap\mathcal{G}^\perp$. Then,
\begin{eqnarray}
&\because& A\in\mathcal{G},\ A=U^aV^b\text{ for some }a,b\in\mathbb{Z},\\
&\because& A\in\mathcal{G}^\perp,\ \beta(A,U)=\beta(A,V)=0\\
&\Rightarrow& \beta(U^aV^b,U)=\beta(U^aV^b,V)=0\\
&\Rightarrow& b\beta(V,U)=a\beta(U,V)=0\\
&\Rightarrow& d\mid b\text{ and }d\mid a\\
&\Rightarrow& U^a=V^b=I\mathcal{R}(\mathcal{T})\\
&\Rightarrow& A=U^aV^b=I\mathcal{R}(\mathcal{T})\\
&\Rightarrow& \text{Contradiction (since we assumed }A\text{ to be non-identity)}.
\end{eqnarray}
Hence, $\mathcal{G}^\perp$ is the orthogonal complement of $\mathcal{G}$ in $\mathcal{K}$. For any $A\in\mathcal{K}$, let
\begin{gather}
a_A=d\beta(A,V)\in\mathbb{Z}\quad\text{and}\quad b_A=-d\beta(A,U)\in\mathbb{Z}.
\end{gather}
Then set
\begin{gather}
A^\prime=A\Left(U^{a_A}V^{b_A}\Right)^{-1}.
\end{gather}
This implies that
\begin{eqnarray}
\beta(A^\prime,U)&=\beta(A,U)-b_A\beta(V,U) = \beta(A,U)-\frac{d\beta(A,U)}{d}=0,\text{ and}\\
\beta(A^\prime,V)&=\beta(A,V)-a_A\beta(U,V) = \beta(A,V)-\frac{d\beta(A,V)}{d}=0.
\end{eqnarray}
Hence, $A^\prime\in\mathcal{G}^\perp$. Thus, every $A\in\mathcal{K}$ can be written as
\begin{gather}
A=\Left(U^{a_A}V^{b_A}\Right)A^\prime,
\end{gather}
where $U^{a_A}V^{b_A}\in\mathcal{G}$ and $A^\prime\in\mathcal{G}^\perp$. Equivalently,
\begin{gather}
\text{for all }A\in\mathcal{K},\text{ there exist }\tilde{A}\in\mathcal{G}\text{ and }A^\prime\in\mathcal{G}^\perp\text{ such that }A=\tilde{A}A^\prime.\label{eq:idp_cond1}
\end{gather}
From \eqref{eq:idp_cond1} and the fact that the intersection $\mathcal{G}\cap\mathcal{G}^\perp$ is trivial, we conclude that $\mathcal{K}$ can, in fact, be written as the internal direct product between $\mathcal{G}$ and $\mathcal{G}^\perp$:
\begin{gather}
\mathcal{K}=\mathcal{G}\times\mathcal{G}^\perp.
\end{gather}
This is one mixed-register symplectic Gram-Schmidt step. We subscript the variables used herein with a "1" to explicitly denote the first step, and we add a tilde to the notation whose purpose will be clear later in the proof:
\begin{itemize}
    \item We chose a $\tilde{U}_1\in\mathcal{K}$ of maximal order $d_1$ and showed that there exists a $\tilde{V}_1\in\mathcal{K}$ also of order $d_1$ such that $\beta\Left(\tilde{U}_1,\tilde{V}_1\Right)=\frac{1}{d_1}$.
    \item We defined $\mathcal{G}_1=\Left\langle\tilde{U}_1,\tilde{V}_1\Right\rangle$ and showed that $\mathcal{G}_1=\Left\langle\tilde{U}_1\Right\rangle\times\Left\langle\tilde{V}_1\Right\rangle\cong\mathbb{Z}_{d_1}\oplus\mathbb{Z}_{d_1}$.
    \item We defined $\mathcal{G}_1^\perp$ as the orthogonal complement of $\mathcal{G}_1$ in $\mathcal{K}$ and showed that $\mathcal{K}=\mathcal{G}_1\times\mathcal{G}_1^\perp\cong(\mathbb{Z}_{d_1}\oplus\mathbb{Z}_{d_1})\oplus\mathcal{G}_1^\perp$.
\end{itemize}
If $\mathcal{G}_1^\perp$ is trivial, then we have our desired decomposition. This is because we have $\mathcal{K}=\mathcal{G}_1=\Left\langle\tilde{U}_1,\tilde{V}_1\Right\rangle$; i.e.,
\begin{gather}
\mathcal{T}/\mathcal{R}(\mathcal{T})=\Left\langle\tilde{U}_1,\tilde{V}_1\Right\rangle.\label{eq:K_decomp}
\end{gather}
Now lift these generators back to $\mathcal{T}$: Let $U_1,V_1\in\mathcal{T}$ be coset representatives for $\tilde{U}_1,\tilde{V}_1\in\mathcal{K}$ respectively. Then, from \eqref{eq:radical_def}, \eqref{eq:radical_gen}, and \eqref{eq:K_decomp}, we find that $\mathcal{T}$ is generated by $W_1,W_2,\dots,W_\ell,U_1,V_1$; i.e.,
\begin{gather}
\mathcal{T}=\Left\langle W_1,W_2,\dots,W_\ell,U_1,V_1\Right\rangle,
\end{gather}
and we are done. If $\mathcal{G}_1^\perp$ is nontrivial, then we recursively perform the mixed-register symplectic Gram-Schmidt step on $\mathcal{G}_1^\perp$ and express it as
\begin{eqnarray}
\mathcal{G}_1^\perp&=&\mathcal{G}_2\times\mathcal{G}_2^\perp\\
&=&\Left\langle\tilde{U_2},\tilde{V}_2\Right\rangle\times\mathcal{G}_2^\perp\ \Left(\text{for some }\tilde{U}_2,\tilde{V}_2\in\mathcal{G}_1^\perp\text{ of maximal order }d_2\Right)\\
&=&\Left\langle\tilde{U}_2\Right\rangle\times\Left\langle\tilde{V}_2\Right\rangle\times\mathcal{G}_2^\perp\\
&\cong&\mathbb{Z}_{d_2}\oplus\mathbb{Z}_{d_2}\oplus\mathcal{G}_2^\perp.
\end{eqnarray}
We terminate this recursive mixed-register symplectic Gram-Schmidt procedure (and get our desired decomposition) after $c$ steps if $\mathcal{G}_c^\perp$ is trivial. Also, $c$ is necessarily finite because if we define $\mathcal{G}_0^\perp=\mathcal{K}$, then for all $i\in\{1,\dots,c\}$, $\Left\lvert\mathcal{G}_i^\perp\Right\rvert<\Left\lvert\mathcal{G}_{i-1}^\perp\Right\rvert$. More precisely,
\begin{gather}
\Left\lvert\mathcal{G}_i^\perp\Right\rvert=\frac{\Left\lvert\mathcal{G}_{i-1}^\perp\Right\rvert}{d_i^2}.
\end{gather}
On termination after $c$ steps, we have the decomposition
\begin{eqnarray}
\mathcal{T}/\mathcal{R}(\mathcal{T})&=&\mathcal{G}_1\times\mathcal{G}_2\times\dots\times\mathcal{G}_c\\
&=&\prod_{i=1}^c\Left\langle\tilde{U}_i,\tilde{V}_i\Right\rangle\ \left(\text{where }\tilde{U}_i,\tilde{V}_i\in\mathcal{K}\text{ with }\Left\lvert\tilde{U}_i\Right\rvert=\Left\lvert\tilde{V}_i\Right\rvert=d_i\right)\label{eq:final_K_UV_decomp}\\
&=&\prod_{i=1}^c\Left(\Left\langle\tilde{U}_i\Right\rangle\times\Left\langle\tilde{V}_i\Right\rangle\Right)\\
&\cong&\bigoplus_{i=1}^c\Left(\mathbb{Z}_{d_i}\oplus\mathbb{Z}_{d_i}\Right).\label{eq:final_K_cylic_decomp}
\end{eqnarray}
Therefore,
\begin{gather}
\boxed{\mathcal{T}\cong\mathcal{R}(\mathcal{T})\oplus\bigoplus_{i=1}^c\Left(\mathbb{Z}_{d_i}\oplus\mathbb{Z}_{d_i}\Right)}.\label{eq:main_decomp}
\end{gather}
Let $U_1,V_1,\dots,U_c,V_c\in\mathcal{T}$ be coset representatives for $\tilde{U}_1,\tilde{V}_1,\dots,\tilde{U}_c\tilde{V}_c\in\mathcal{K}$ respectively. Then, from \eqref{eq:radical_def}, \eqref{eq:radical_gen}, and \eqref{eq:final_K_UV_decomp}, we can write
\begin{gather}
\mathcal{T}=\langle W_1,W_2,\dots,W_\ell,U_1,V_1,U_2,V_2,\dots,U_c,V_c\rangle
\end{gather}
such that
\begin{enumerate}
    \item the generators satisfy the following relations (all mod 1):
    \begin{enumerate}
        \item $\phi(W_i)\tilde{\odot}\phi(P)=0$ for all $P\in\mathcal{T}$, for all $i\in\{1,\dots,\ell\}$
        \begin{quote}
            (from \eqref{eq:radical_def} and \eqref{eq:radical_gen}),
        \end{quote}
        \item $\phi(U_i)\tilde{\odot}\phi(U_j)=\phi(V_i)\tilde{\odot}\phi(V_j)=0$ for all $i,j\in\{1,\dots,c\}$
        \begin{quote}
            ($\because$ $U_i\in\mathcal{G}_j^\perp$ for all $j<i$ and $U_j\in\mathcal{G}_i^\perp$ for all $j>i$, and similarly for the $V$'s),
        \end{quote}
        \item $\phi(U_i)\tilde{\odot}\phi(V_j)=0$ for all $i\ne j$
        \begin{quote}
            ($\because$ $U_i\in\mathcal{G}_j^\perp$ for all $j<i$ and $V_j\in\mathcal{G}_i^\perp$ for all $j>i$),
        \end{quote}
        \item $\phi(U_i)\tilde{\odot}\phi(V_i)=\frac{1}{d_i} (\!\!\!\!\mod 1)$ with $d_i\mid Q$
        \begin{quote}
            ($\because$ $\lvert U_i\rvert=\lvert V_i\rvert=d_i\Rightarrow\beta(U\mathcal{R}(\mathcal{T})_i,V_i\mathcal{R}(\mathcal{T}))=\frac{1}{d_i}$),
        \end{quote}
    \end{enumerate}
    \item $d_1\mid d_2\mid\ldots\mid d_c$
    \begin{quote}
        (by the fundamental theorem of finite Abelian groups, the cyclic groups in the decomposition given in \eqref{eq:final_K_cylic_decomp} can be (re)written in invariant-factor form; i.e., they can be reordered such that this divisibility chain condition for their orders is satisfied),
    \end{quote}
    \item the multiset $\{d_1,\ldots.d_c\}$ is unique depending only on $\mathcal{T}$
    \begin{quote}
        ($\because$ the fundamental theorem of finite Abelian groups guarantees that the decomposition of $\mathcal{K}$ into cyclic groups given in $\eqref{eq:final_K_cylic_decomp}$ is unique and depends only on $\mathcal{K}$ and the multiset in question simply contains the orders of those cyclic groups and $\mathcal{K}$ depends only on $\mathcal{T}$).
    \end{quote}
\end{enumerate}
\end{proof}

\end{document}